\newcommand{\keywords}[1]{\par\addvspace\baselineskip
\noindent\keywordname\enspace\ignorespaces#1}
\begin{document}

\mainmatter

\title{Finite Automata Encoding Piecewise Polynomials}

\titlerunning{Finite Automata Encoding Piecewise Polynomials}

\author{Dmitry Berdinsky%\inst{1,2}
 \and 
        Prohrak Kruengthomya%\inst{1,2} 
}

\authorrunning{D. Berdinsky and  P. Kruengthomya}

% (feature abused for this document to repeat the title also on left hand pages)

% the affiliations are given next; don't give your e-mail address
% unless you accept that it will be published
\institute{Department of Mathematics, 
  Faculty of Science, 
  Mahidol University   
 %Bangkok, 10400, Thailand 
  %\and 
 and Centre of Excellence in Mathematics, 
 CHE, Bangkok, 10400, Thailand  
 %\\
 %\mailsa\\
}

% NB: a more complex sample for affiliations and the mapping to the
% corresponding authors can be found in the file "llncs.dem"
% (search for the string "\mainmatter" where a contribution starts).
% "llncs.dem" accompanies the document class "llncs.cls".

\toctitle{Finite Automata Encoding Functions}
	 %:A Representation Using B--splines 
\tocauthor{Dmitry~Berdinsky and Prohrak~Kruengthomya}
\maketitle
\setcounter{footnote}{0}

\begin{abstract}      

Finite automata are used to encode geometric 
figures, functions and can be used for image compression and processing.   
The original  approach 
is to represent each point of a figure 
%(a graph of a function)
in $\mathbb{R}^n$ 
as a convolution of its $n$ coordinates 
written in some base.   
Then a figure is said to be encoded as a finite automaton
if the set of convolutions corresponding to the
points in this figure is accepted by a finite
automaton.    
%J\"{u}rgensen, Staiger and Yamasaki 
%\cite{Jurgensen07} 
%showed that  
The only %continuously 
differentiable  
functions which can be encoded  
as a finite automaton in this way are linear. 
In this paper we propose a representation 
which enables to encode  piecewise polynomial 
functions with arbitrary degrees of smoothness
that substantially extends a 
family of functions which can be 
encoded as finite automata.
Such representation naturally comes from 
the framework of hierarchical tensor product 
B--splines, which are piecewise polynomials widely utilized in numerical computational 
geometry. We show that finite automata 
provide a suitable tool
for solving computational 
problems arising in this framework %including the case 
when the support of a function is  unbounded.      
\keywords{finite automata, encoding functions, hierarchical meshes, piecewise polynomials.} %B--splines  
\end{abstract}

\section{Introduction}

The idea of expressing subsets in  
$\mathbb{R}^n$ as finite automata 
is not new. It was originally introduced 
by B\"{u}chi in the 1960s as a tool 
to establish decidability results in arithmetic 
\cite{Buchi62}. 
This idea later emerged  in the 1990s--2000s, but as a tool for handling linear arithmetic
over integers and reals \cite{Boigelot05,Boigelot97}
and for image compression and processing \cite{Culik_Comp_Graph_97,LinYen00}. 
The original approach due to 
Boigelot, Bronne and Rassart
\cite{Boigelot97} 
consists of representing a subset in $\mathbb{R}^n$
as a finite automaton accepting encodings of 
points in this subset as infinite 
stings of symbols
over some finite alphabet. 
Namely, a point $\overline{x} = (x_1, \dots, x_n) \in \mathbb{R}^n$ is represented as a convolution of 
infinite strings written in a base $b \geqslant 2$. For example, a point 
$\left(4 \pi, \frac{\pi}{6} \right) \in
\mathbb{R}^2$ 
is represented by the convolution      
of two infinite 
strings $12.5663706144\dots$ and
$0.5235987756\dots$ representing 
$4 \pi$ and $\frac{\pi}{6}$, respectively, 
in the decimal representation ($b=10$):
{\small 
	$\begin{array}{lllllllllllll} 
		1 & 2 & . & 5 & 6 & 6 & 3 & 7 & 0 & 6 & 1 & 4 & 4 \\ 
		0 & 0 & . & 5 & 2 & 3 & 5 & 9 & 8 & 7 & 7 & 5 & 6 
	\end{array} \dots $}.

%    \footnote{In their original 
%	paper J\"{u}rgensen, Staiger and Yamasaki \cite{Jurgensen07} 
%	define figures encoded 
%	as a finite automaton only for subsets 
%	of the unit cube $[0,1]^n$.  
%	However their definition can 
%	be straightforwardly applied to subsets in $\mathbb{R}^n$. 
%	Also, in their definition    
%	they use a more general notion 
%	-- regular $\omega$--languages which extend  %$\omega$--languages recognized by finite automata 
%	by applying operations of union, intersection,  %set--theoretical difference, projections and 
%	their inverse mappings.}  

A figure $U \subseteq  \mathbb{R}^n$ 
is given by its set of 
points $\overline{x} \in U$
which defines a 
language of infinite strings  
($\omega$--language) $L$ of
all possible convolutions representing 
$\overline{x} \in U$ in some base 
$b \geqslant 2$. 
The figure $U$ is  said to be encoded as a finite 
automaton if $L$ is recognized by
a B\"{u}chi automaton. Recall that a 
(nondeterministic) B\"{u}chi automaton is a tuple 
$\mathcal{A} = (\Sigma, S, I ,\Delta, F)$, 
where $\Sigma$ is the alphabet, 
$S$ is a finite set of states, 
$I \subseteq S$ is a set of initial states, 
$\Delta \subseteq S \times \Sigma \times S$ 
is a transition relation and 
$F \subseteq S$ is a set of accepting states. 
The B\"{u}chi automaton $\mathcal{A}$ is said to be 
deterministic if  
$I$ is a singleton and for all 
$s \in S$ and $\sigma \in \Sigma$ 
there is at most one $s' \in S$ such that 
$(s,\sigma, s') \in \Delta$.
A run associated with an infinite string 
$w = \sigma_1 \sigma_2 \sigma_3\dots$, 
where $\sigma_j \in \Sigma$ for all $j \geqslant 1$,  
is a sequence 
of states $s_i$, $i \geqslant 0$ such that 
$s_0 \in I$ and 
$(s_i,\sigma_{i+1}, s_{i+1}) \in \Delta$ for 
all $i \geqslant 0$. 
A run $s_i$, $i \geqslant 0$ is accepting 
if there is a state in $F$ that appears infinitely often 
in this run. The input $w$ is accepted by the 
B\"{u}chi automaton $\mathcal{A}$ if  
there is an accepting run associated to $w$.  
An $\omega$--language $L$ is said to be 
recognized by the B\"{u}chi automaton  $\mathcal{A}$ 
if it consists of all infinite strings accepted 
by $\mathcal{A}$. In this case the 
$\omega$--language $L$ is called regular.

A function $f : \mathbb{R}^d \rightarrow \mathbb{R}$ is said 
to be encoded as a finite automaton if the graph 
$\Gamma (f) = \{(\overline{v},f(\overline{v}))\,|\, \overline{v} \in \mathbb{R}^d\} 
\subseteq \mathbb{R}^{d+1}$ is encoded as 
a finite automaton. 
J\"{u}rgensen, Staiger and 
Yamasaki \cite{JurgensenStiager01,Jurgensen07}
%\cite[Corollary~28]{Jurgensen07}
showed that a continuously differentiable 
function of one variable with non--constant derivative is not encodable as a finite 
automaton if it   
belongs to a certain restricted 
class of deterministic B\"{u}chi automata\footnote{Namely, 
in \cite{Jurgensen07} 
an $\omega$--language is said to be regular if it is
recognized by a finite automaton  
$\mathcal{A} = (\Sigma, S, s_0 ,\delta)$ with the input alphabet $\Sigma$, the set of states $S$, the initial state 
$s_0  \in S$ and the transition function
$\delta  : S \times \Sigma  \rightarrow  S \cup \{\bot\}$, 
where $\mathcal{A}$ accepts an infinite string $w$ if   
$\delta (s_0, u) \neq \bot$ 
for all prefixes $u$ of $w$; $\delta (s_0, u)$ is defined in 
a usual way. 
If $\delta(s,\sigma) = \bot$, it means that $\delta(s,\sigma)$ is undefined. 
In other words, the automaton $\mathcal{A}$ accepts the string $w$ if it does not get stuck while reading $w$.}. 
Hieronymi and Walsberg \cite{Hieronymi17} extended 
this result for an arbitrary nondeterministic 
B\"{u}chi automaton. Block Gorman et al. \cite{BGHK+20} 
generalized it to differentiable functions.   
Thus, if a differentiable function is encoded as a finite 
automaton, it can only be linear.
For other computational models, e.g., finite state transducers,
similar results were proved in \cite{Anashin15,Konecny02,Muller94}.

%The aim of this paper is threefold. 

{\it Alternative encoding of  
	functions as finite automata.}
The alternative approach presented in this paper is inspired by the idea 
of finite automata based compression
of black and white images 
proposed by Culik and Valenta \cite{Culik_Comp_Graph_97}.
%where each finite string accepted 
%by a finite automaton corresponds 
%to a certain black square 
In their approach a black--white image is composed of  black squares on the white background, where each of 
the black squares corresponds 
to a finite string accepted 
by a finite automaton. 
Informally speaking, the building 
blocks of a black--white image are not points
as in the traditional approach, but squares. 

What sort of building blocks could we use for composing a function? An answer can be found in numerical computational geometry. 
Indeed, a function 
$f : \mathbb{R}^d \rightarrow \mathbb{R}$
can be written as a linear combination 
$f =  \sum_{\xi \in \Xi} \lambda_\xi B_\xi$, 
where $B_\xi$ are tensor product B--splines, 
$\lambda_\xi \in \mathbb{R}$   and 
$\Xi$ is a countable set. 
A tensor product B--spline is a product 
of univariate B--splines 
$N[x_0,\dots,x_{m+1}](x)$  
which are concrete piecewise polynomials of degree $m$ having support in the interval $[x_0,x_{m+1}]$, see the 
subsection \ref{spline_section}.
%It is then natural to consider pairs  
%$\left(\lambda_\xi, B_\xi \right)$ as building 
%blocks of a function $f$. 
So instead of representing every single point of $\Gamma(f)$ as the convolution of strings, 
%we can represent
it is then natural to represent
each pair $(\lambda_\xi, B_\xi)$ 
as the convolution of strings 
that present the tensor product 
B--spline %the knots defining 
$B_\xi$ and the coefficient $\lambda_\xi$. 
If the collection of all such convolutions 
is accepted by a finite automaton, we will say that 
$f : \mathbb{R}^d
   \rightarrow \mathbb{R}$ 
is encoded as a finite automaton. 
The technical details of the proposed 
encoding are explained in  
the subsections \ref{how_to_encode_meshes_subsec} and  \ref{encoding_of_splines_subsec}. 

%The first aim of this paper is 
%to show that one can substantially extend 
%a class of functions encoded as finite automata
%by using an alternative representation.  

{\it Contributions of the paper.}
The contribution of this paper is two--fold. 
The first goal of the paper is to show that the proposed alternative approach for encoding functions enables to represent as finite automata a rich 
family of smooth functions which, first, contains linear functions and, second, can  arbitrarily close approximate any continuous function on a compact domain. 
The former is shown 
explicitly in the subsection \ref{examples_regular_splines}
by representing 
a univariate linear function 
as a sum of the B--splines
defined on the grid with equally spaced knots.   
For the latter we use a 
well--established framework
of {\it hierarchical tensor product
B--splines} originally 
proposed by Forsey and Bartels  \cite{ForseyBartels88} and
Kraft \cite{Kraft97} in the 1980s--90s 
where the knots defining tensor product 
B--splines are selected in a systematic
way (referred to as 
{\it Kraft's selection mechanism}) on a dyadic mesh 
defined by a finite sequence of nested domains, 
see the subsection \ref{spline_section}.
It is only enough to notice that 
for every continuous function  
$g : D \rightarrow \mathbb{R}$ 
defined on a compact domain 
$D \subseteq \mathbb{R}^d$ and 
$\varepsilon > 0$ there is a finite sequence of nested domains
in $\mathbb{R}^d$ such that
there exists a function 
$f: \mathbb{R}^d \rightarrow 
     \mathbb{R}$ obtained as 
     a finite linear combination of  
     hierarchical 
     tensor product B--splines generated by Kraft's 
     selection mechanism 
     for which 
$|f(x) - g(x)| 
 \leqslant \varepsilon$ 
for all $x \in D$. 
As $f$ is defined by a finite linear combination of tensor product B--splines  it can be encoded as a finite automaton in the sense proposed in this paper.

%Moreover, if a given function 
%$f_0 : \mathbb{R}^d \rightarrow \mathbb{R}$ %encoded as a finite automaton needs to 
%be changed in some compact domain
%$D_0 \subseteq \mathbb{R}^d$ so that a
%new function $f_1 : 
%\mathbb{R}^d \rightarrow \mathbb{R}$ approximates 
%a continuous function 
%$g_0 : D_0 \rightarrow \mathbb{R}$ with the error %at most $\varepsilon_0 >0$ in the domain $D_0$, %it can be done using only finitely many %hierarchical tensor product B--splines and, %therefore, $f_0$ and $f_1$ are the same outside %of some bounded neighborhood of %$D_0$\footnote{This is referred to as the local %refinement (or adaptive refinement)  
%in the literature and it is useful 
%in applications appeared in computer--aided %design, computer graphics and numerical analysis. %In addition to hierarchical tensor 
%product B--splines there 
%are three other major spline frameworks which %support local refinement: T--splines, 
%LR--splines and PHT--splines, 
%see, e.g., the introduction in
%\cite{Jut23} for a short overview. This paper  %considers only the framework of hierarchical %tensor product B--splines.}.  
%In particular, 
%this implies that $f_1$ can be encoded as a %finite automaton as well. 
 
The second goal of the paper is
to show that the
proposed encoding 
suits well the framework 
of hierarchical tensor product 
B--splines enabling to handle functions with the unbounded support using only a finite amount of memory. 
We consider a 
hierarchical mesh defined 
by a nested sequence of  
domains 
$\Omega^0 = \mathbb{R}^d 
\supseteq \Omega^1 \supseteq 
\dots \supseteq \Omega^{N-1} \neq \emptyset$
that 
gives us the collection 
of tensor product B--splines
$\Xi$ (generated by Kraft's selection mechanism) 
defining a function 
$f = \sum_{\xi \in \Xi} 
 \lambda_\xi B_\xi$.  
Every domain 
$\Omega^1,\dots,\Omega^{N-1}$ 
is defined as a union of 
$d$--dimensional cubes in 
$\mathbb{R}^d$ each of which 
we associate with its barycenter. 
Then each barycenter we 
encode as the convolution 
of its coordinates written 
in some even base $b \geqslant 2$. 
We assume that  
the language of such convolutions is regular, so it can be given as a finite automaton. 
We refer to such hierarchical mesh as a
{\it regular hierarchical mesh}, see the 
subsection \ref{how_to_encode_meshes_subsec}. 
Now, let us be given this finite 
automaton representing the domains 
$\Omega^1, \dots, \Omega^{N-1}$. 
There are three computational problems which 
immediately appear in the context of 
the framework of hierarchical 
B--splines: 
\begin{enumerate}
\item{Are the domains $\Omega^1,\dots,\Omega^{N-1}$ 
really nested?} 
\item{Do the shapes of the 
domains $\Omega^1,\dots,\Omega^{N-1}$ 
satisfy Assumption B 
so the tensor 
product B--splines generated by Kraft's selection mechanism truly form a basis of the space of 
piecewise polynomials (splines) with highest order of smoothness?}
\item{How one can get the knots of tensor product 
B--splines generated by Kraft's 
selection mechanism?}    
\end{enumerate} 

\noindent To solve these three problems we use the concept of a 
{\it FA--presented structure} and the fact that for a 
relation of a FA--presented structure defined by a 
first order formula there exists an effective 
procedure deciding it 
(see the subsection \ref{fa-presentable_structures}). 
For the problems 1 and 2 we find 
first order sentences for some FA--presented structures
which are true if and only if the domains 
$\Omega^1, \dots, \Omega^{N-1}$ are nested and satisfy 
Assumption B, respectively (see Theorem \ref{nestedness_thm} 
and \ref{shape_of_domains_assumption_thm} in the subsections 
\ref{verification_of_nestedness_subsection} and 
\ref{domain_condition_verification_subsection}).  
We also notice that these first order sentences 
provide polynomial--time algorithms for solving 
the problems 1 and 2.          
Similarly, for the problem 3 we find 
a first order formula defining for some FA--presented 
structure the collection of tensor product B--splines 
generated by Kraft's selection mechanism 
(see Theorem \ref{kraft_basis_thm} in the subsection 
\ref{regularity_for_K_subsection}).

Then we look at the problem 
of computing values of the function
$f = \sum_{\xi \in \Xi} \lambda_\xi B_\xi$. 
We  assume that for $f$  
the language of convolutions of strings presenting 
a certain point in the support of 
$B_\xi$ and the coefficient $\lambda_\xi$
for $\xi \in \Xi$ is regular, so it 
can be given as a finite automaton.   
We refer to the function $f$ as a 
{\it regular spline}, see the subsection 
\ref{encoding_of_splines_subsec}.  
Now, let us be given this finite automaton representing 
the function $f$. The problem is as follows:   
\begin{enumerate}
\setcounter{enumi}{3}
\item{For a given point $\overline{x} \in \mathbb{R}^d$,
	  how to compute the value $f(\overline{x})$?}	
\end{enumerate}	

\noindent To find the value $f(\overline{x})$ one has to identify 
all tensor product B--splines for each of which the point 
$\overline{x}$ is contained in its support (the number of such 
tensor product B--splines is bounded from above by a constant that 
depends on $N$, $d$ and the degree of B--splines). Then 
for each of these splines $\beta$ one has to 
find the respective coefficient $\lambda_\beta$ and 
the vector $(\overline{x} - \overline{q}_\beta)  \in \mathbb{R}^d$, 
where $\overline{q}_\beta$ 
is the lower left corner of the closure of the support of $\beta$.
This enables to compute the value $\beta(\overline{x})$ using
the exact formulas for B--splines, see, e.g., the formulas \eqref{degree_1_bspline}, \eqref{degree_2_bspline} and \eqref{degree_3_bspline} for the degrees 
$1$, $2$ and $3$, respectively. We show that the coefficients 
$\lambda_\beta$ and the vectors $(\overline{x}-\overline{q}_\beta)$
can be computed in linear time that finally leads to a quadratic time algorithm for computing $f(\overline{x})$, see 
Theorem \ref{comp_val_time_complexity} in the subsection \ref{computing_values_subsection}.

Finally we analyze a key procedure in  
the framework of hierarchical meshes 
-- a refinement of the hierarchical mesh 
when  each cell composing a nonempty subdomain 
$\Omega^N$ of $\Omega^{N-1}$ is dyadically subdivided into smaller subcells.
We assume that the language representing 
the domain $\Omega^N$ is regular. 
We refer to such refinement as a 
{\it regular refinement}.  
The collection of hierarchical tensor 
product B--splines $\Xi'$ (generated by Kraft's 
selection mechanism 
for the hierarchical mesh defined by
the nested sequence of domains 
$\Omega^0 = \mathbb{R}^d 
\supseteq \Omega^1 \supseteq 
\dots \supseteq \Omega^{N-1} 
\supseteq \Omega^{N} \neq \emptyset$)   
is different from $\Xi$ as well as 
the linear combination 
$\sum_{\xi' \in \Xi'} \lambda_{\xi'} B_{\xi'}$
representing the function $f$ 
is different from $\sum_{\xi \in \Xi} \lambda_{\xi} B_\xi$.      
We show that this new linear combination $\sum_{\xi' \in \Xi'} \lambda_{\xi'} B_{\xi'}$ 
is encoded as a finite automaton as well, see Theorem \ref{refinement_thm} 
in the subsection 
\ref{refining_meshes_splines_subsection}.  
This means that a regular spline remains 
a regular spline after 
a regular refinement 
of a hierarchical mesh.

{\it The structure of the paper.}
The rest of the paper is organized as follows.
Section \ref{preliminaries_section}  
contains preliminaries needed to 
explain the construction of encoding  
functions as finite automata:
the subsections \ref{spline_section}, 
\ref{finite_automata_section} and \ref{fa-presentable_structures} 
provide a necessary background on the framework 
of hierarchical tensor product B--splines, finite automata and FA--presented structures, 
respectively. 
Regular hierarchical meshes and regular splines are introduced 
and then studied in Sections   
\ref{regular_hierarchical_mesh_section} and \ref{regular_hierarchical_splines_section}, 
respectively. Section \ref{conclusion_section} 
concludes the paper.

\section{Preliminaries}
\label{preliminaries_section}

This section recalls necessary definitions, 
notations and facts from three 
different areas: spline theory, automata theory 
and the field of finite automata presentable 
structures which we cover in the 
subsections \ref{spline_section}, 
\ref{finite_automata_section} and  
\ref{fa-presentable_structures}, respectively.

\subsection{Splines over hierarchical meshes}
\label{spline_section}

In this subsection we will recall the notions 
of a tensor product B--spline, a hierarchical 
mesh and a spline space over a hierarchical mesh. 
All key facts from the framework of hierarchical 
tensor product B--splines that 
we need in this paper are 
covered in this subsection.    

Let $\ell$ be a nonnegative integer.  We denote by 
$T^\ell$ a bi--infinite knot vector: 
$T^\ell =
\left( \dots, t^\ell_{i-1}, 
t^\ell_i, t^\ell_{i+1},\dots \right)$, where  
$t^\ell_i = \frac{i}{2^\ell}$ for 
$i \in \mathbb{Z}$. This bi--infinite knot 
vector $T^\ell$ is 
uniform with the %same 
distances between consecutive  
knots equal to $\frac{1}{2^\ell}$. 
Let $d$ be a positive integer. 
We denote by $\mathcal{G}^\ell _d$ a $d$--dimensional 
grid consisting of the hyperplanes 
$H^\ell_{j,i} =  \{(x_1,\dots,x_d)\,|\, x_j = t^\ell_i \}$ for 
$j = 1,\dots, d$ and $i \in \mathbb{Z}$.  
%   We will assume that each 
%   $H^{\ell}_{j,i}$ has an associated multiplicity $1$.    
For a given integer $m \geqslant 0$,  
a grid $\mathcal{G}^\ell_d$ 
defines the set of tensor product 
B--splines $B^{\ell}_{d,m}$ each of which is 
the product:
\begin{equation}
	\label{tensor_product_b-splines_def}
	P_{\overline{i},m}^\ell 
	(\overline{x}) = 
	N_{i_1,m}^{\ell} (x_1)  \dots 
	N_{i_d,m}^{\ell} (x_d),  
\end{equation}   
where 
$\overline{i}  = 
(i_1,\dots,i_d) \in \mathbb{Z}^d$,  
$\overline{x} = 
(x_1,\dots, x_d) \in \mathbb{R}^d$ 
and 
$N_{i,m}^\ell (t)$ 
is the $i$th B--spline basis function of degree  
$m$ associated to the knot vector $T^\ell$ 
which is recursively defined by Cox--de Boor's
formula for $j = 0, \dots, m$: 
\begin{equation}  
	\label{degree_0_bspline}   
	N_{i,0} ^\ell (t) = \begin{cases} 
		1, \, t^{\ell}_{i} \leqslant t < t^{\ell}_{i+1}, 
		\\
		0, \, \text{otherwise}.
	\end{cases},            
\end{equation}
\begin{equation} 
	\label{cox-de_boor_formula}
	N_{i,j} ^\ell (t) = 
	\frac{t - t_i^\ell}{t_{i+j}^\ell - t_{i}^\ell} 
	N_{i,j-1} ^\ell (t)  + \frac{t_{i+j+1}^\ell - t}
	{t_{i+j+1}^\ell - t_{i+1}^\ell} N_{i+1,j-1}^\ell(t).
\end{equation}  

\noindent Each tensor product B--spline  
$P_{\overline{i},m}^\ell$ has local 
support defined as:    
\begin{equation*}    
	%\label{support_tpbspline}    
	\mathrm{supp}\, P_{\overline{i},m}^\ell =  
	\{\overline{x}  \, | \, 
	P_{\overline{i},m}^\ell 
	(\overline{x})
	\neq 0 \} = \\
	(t^\ell_{i_1} ,
	t^\ell_{i_1 + m +1}) \times 
	\dots \times 
	(t^\ell_{i_d},
	t^\ell_{i_d+ m +1} )
\end{equation*}
on which $P_{\overline{i},m}^\ell$ takes 
only positive values. 
Tensor product B--splines from $B^\ell_{d,m}$
are  locally linear independent: for every open 
bounded set $U \subseteq \mathbb{R}^d$ the tensor product 
B--splines
from $B^\ell_{d,m}$
having nonempty intersections of 
its support with $U$ are linearly independent on $U$.   
For an introduction to B--splines   
the reader is referred to \cite{Prautzsch2002}.

We denote by $\mathcal{C}^\ell_d$ 
the collection of all  
closed $d$--dimensional cubes 
$\prod\limits_{j=1}^d \left[t^\ell_{i_j} , t^\ell_{i_j+1}\right]$. Following \cite{MokrisJutler14} 
we call each of the cubes from $\mathcal{C}^\ell_d$
a cell of the grid $\mathcal{G}^\ell_d$ (or, simply, a cell).         
Let us consider a nested sequence of domains   
$\Omega^0 = \mathbb{R}^d \supseteq \Omega^1 
 \supseteq \dots  \supseteq \Omega^{N-1} 
 \supseteq \Omega^N  =  \emptyset$,  
 where $\Omega^{N-1} \neq \emptyset$. 
	
\noindent {\bf Assumption A.} 	
{\it We assume that each $\Omega^\ell$, 
	 $\ell = 1, \dots, N-1$ is composed of cells  
	 from $\mathcal{C}^{\ell-1}_d$.
	 That is, for each $\ell=1,\dots,N-1$  
 	 there is a subset $M \subseteq 
 	 \mathcal{C}^{\ell-1}_d$
	 for which  $\Omega^\ell  = \bigcup\limits_{c \in M}  c$.}

A hierarchy of domains 
$\Omega^0 = \mathbb{R}^d \supseteq \Omega^1 
\supseteq \dots  \supseteq \Omega^{N-1} 
\supseteq \Omega^N  =  \emptyset$ 
satisfying Assumption A  
creates a subdivision of $\mathbb{R}^d$ into 
the collection of cells 
$R^\ell \subseteq \mathcal{C}_d ^\ell$ 
such that
$\Omega^\ell \setminus 
\mathring{\Omega}^{\ell+1} = 
\bigcup\limits_{c \in R^\ell} c$  
for $\ell = 0, \dots, N-1$, where 
$\mathring{\Omega}^{\ell+1}$ is the interior of $\Omega^{\ell+1}$.           
We denote the subdivision of $\mathbb{R}^d$ into the 
cells from $R^\ell$, 
$\ell = 0, \dots, N-1$ by $\mathcal{T}$. 
We will call 
$\mathcal{T}$ a hierarchical mesh. 
See Fig.~\ref{tmesh_example1} for 
an example of a $2$--dimensional 
hierarchical mesh generated by a nested 
sequence of domains 
$\Omega^0 = \mathbb{R}^2 \supseteq 
\Omega^1 \supseteq \Omega^2 \supseteq 
\Omega^3 = \emptyset$.  
\begin{figure}[h]
	\centering
	\begin{tikzpicture}[scale=0.6] 
		\foreach \x in {0,...,9}
		\draw (\x, -0.2) -- (\x, 5.2);
		\foreach \y in {0,...,5} 
		\draw (-0.2,\y) -- (9.2, \y);  
		\foreach \x in {0,...,8} 
		\draw [dashed] (\x+0.5,-0.2) -- (\x + 0.5, 5.2); 
		\foreach \y in {0,...,4} 
		\draw [dashed] (-0.2,\y + 0.5) -- (9.2, \y + 0.5);
		\foreach \x in {0,...,17} 
		\draw [densely dotted]  (\x * 0.5 + 0.25, -0.2) -- 
		(\x * 0.5 + 0.25, 5.2); 
		\foreach \y in {0,...,9}  
		\draw [densely dotted] (-0.2, \y * 0.5 + 0.25) -- 
		(9.2, \y * 0.5 + 0.25);   
		\draw [blue,line width=2.0pt] (3,4) -- (6,4) -- (6,5) -- (3,5) -- (3,4);   
		\draw [blue,line width=2.0pt] (9,1) -- (7,1) -- 
		(7,2) -- (6,2) -- (6,3) -- (7,3) -- (7,4) -- (9,4) -- 
		(9,2) -- (9,1); 
		\draw [blue, line width=2.0pt] (0,1) -- (2,1) -- 
		(2,2) -- (3,2) -- (3,3) -- (2,3) -- (2,4) -- (0,4) 
		-- (0,2) -- (0,1); 
		\draw  [blue, line width=2.0pt] (3,0) -- (6,0) -- 
		(6,1) -- (3,1) -- (3,0);
		\draw  [red, line width=1.0pt] 
		(3.5,0) -- (3.5,0.5) -- (4, 0.5) -- 
		(4, 1) -- (5,1) -- (5, 0.5) -- (5.5, 0.5) -- 
		(5.5,0) -- (3.5,0); 
		\draw [red, line width=1.0pt] 
		(3.5, 5) -- (5.5, 5) -- 
		(5.5, 4.5) -- (5,4.5) -- (5, 4) -- 
		(4,4) -- (4, 4.5) -- (3.5,4.5) -- 
		(3.5,5);
		\draw [red, line width=1.0pt] 
		(7.5,3.0) -- (7.5,3.5) -- (8,3.5) -- 
		(8,4) -- (8.5,4) -- (8.5,3) -- (8,3) -- 
		(8,2) -- (8.5,2) -- (8.5,1) -- (8,1) -- 
		(8,1.5) -- (7.5,1.5) -- (7.5,2) -- 
		(7,2) -- (7,3) -- (7.5,3); 
		\draw [red, line width=1.0pt]  
		(1.5,3) -- (2,3) -- (2,2) -- 
		(1.5,2) -- (1.5,1.5) -- 
		(1,1.5) -- (1,1) -- (0.5,1) -- 
		(0.5,2) -- (1,2) -- (1,3) -- 
		(0.5,3) -- (0.5,4) -- (1,4) -- 
		(1,3.5) -- (1.5,3.5) -- (1.5,3);          
	\end{tikzpicture} 
	\hskip5mm
	\begin{tikzpicture}[scale=0.6] 
		\foreach \x in {0,...,9}
		\draw (\x, -0.2) -- (\x, 5.2);
		\foreach \y in {0,...,5} 
		\draw (-0.2,\y) -- (9.2, \y);  
		
		\draw (0.5 - 0.5,1.5) -- (0.5 + 0.5,1.5); 
		\draw (0.5, 1.5 - 0.5) -- (0.5,1.5 + 0.5);  
		\draw (1.5 - 0.5,1.5) -- (1.5 + 0.5,1.5); 
		\draw (1.5, 1.5 - 0.5) -- (1.5,1.5 + 0.5);
		\draw (7.5 - 0.5,1.5) -- (7.5 + 0.5,1.5); 
		\draw (7.5, 1.5 - 0.5) -- (7.5,1.5 + 0.5);
		\draw (8.5 - 0.5,1.5) -- (8.5 + 0.5,1.5); 
		\draw (8.5, 1.5 - 0.5) -- (8.5,1.5 + 0.5);
		\draw (1.5 - 0.5,2.5) -- (1.5 + 0.5,2.5); 
		\draw (1.5, 2.5 - 0.5) -- (1.5,2.5 + 0.5);
		\draw (2.5 - 0.5,2.5) -- (2.5 + 0.5,2.5); 
		\draw (2.5, 2.5 - 0.5) -- (2.5,2.5 + 0.5);
		\draw (0.5 - 0.5,2.5) -- (0.5 + 0.5,2.5); 
		\draw (0.5, 2.5 - 0.5) -- (0.5,2.5 + 0.5);
		\draw (6.5 - 0.5,2.5) -- (6.5 + 0.5,2.5); 
		\draw (6.5, 2.5 - 0.5) -- (6.5,2.5 + 0.5);
		\draw (8.5 - 0.5,2.5) -- (8.5 + 0.5,2.5); 
		\draw (8.5, 2.5 - 0.5) -- (8.5,2.5 + 0.5);
		\draw (7.5 - 0.5,2.5) -- (7.5 + 0.5,2.5); 
		\draw (7.5, 2.5 - 0.5) -- (7.5,2.5 + 0.5);
		\draw (0.5 - 0.5,3.5) -- (0.5 + 0.5,3.5); 
		\draw (0.5, 3.5 - 0.5) -- (0.5,3.5 + 0.5);
		\draw (1.5 - 0.5,3.5) -- (1.5 + 0.5,3.5); 
		\draw (1.5, 3.5 - 0.5) -- (1.5,3.5 + 0.5);
		\draw (7.5 - 0.5,3.5) -- (7.5 + 0.5,3.5); 
		\draw (7.5, 3.5 - 0.5) -- (7.5,3.5 + 0.5);   
		\draw (8.5 - 0.5,3.5) -- (8.5 + 0.5,3.5); 
		\draw (8.5, 3.5 - 0.5) -- (8.5,3.5 + 0.5);
		\draw (3.5 - 0.5,0.5) -- (3.5 + 0.5,0.5); 
		\draw (3.5, 0.5 - 0.5) -- (3.5,0.5 + 0.5);
		\draw (4.5 - 0.5,0.5) -- (4.5 + 0.5,0.5); 
		\draw (4.5, 0.5 - 0.5) -- (4.5,0.5 + 0.5);
		\draw (5.5 - 0.5,0.5) -- (5.5 + 0.5,0.5); 
		\draw (5.5, 0.5 - 0.5) -- (5.5,0.5 + 0.5);
		\draw (3.5 - 0.5,4.5) -- (3.5 + 0.5,4.5); 
		\draw (3.5, 4.5 - 0.5) -- (3.5,0.5 + 4.5);
		\draw (4.5 - 0.5,4.5) -- (4.5 + 0.5,4.5); 
		\draw (4.5, 4.5 - 0.5) -- (4.5,4.5 + 0.5);
		\draw (5.5 - 0.5,4.5) -- (5.5 + 0.5,4.5);  
		\draw (5.5, 4.5 - 0.5) -- (5.5,4.5 + 0.5);
		%%%%%%%%%%%%smallest crosses%%%%%%%%%%%%%%%%%%%%%%
		
		\draw (0.75 - 0.25,1.25) -- (0.75 + 0.25,1.25);  
		\draw (0.75, 1.25 - 0.25) -- (0.75,1.25 + 0.25);    
		\draw (0.75 - 0.25,1.75) -- (0.75 + 0.25,1.75);  
		\draw (0.75, 1.75 - 0.25) -- (0.75,1.75 + 0.25);
		\draw (1.25 - 0.25,1.75) -- (1.25 + 0.25,1.75);  
		\draw (1.25, 1.75 - 0.25) -- (1.25,1.75 + 0.25);
		\draw (1.25 - 0.25,2.25) -- (1.25 + 0.25,2.25);  
		\draw (1.25, 2.25 - 0.25) -- (1.25,2.25 + 0.25);
		\draw (1.75 - 0.25,2.25) -- (1.75 + 0.25,2.25);  
		\draw (1.75, 2.25 - 0.25) -- (1.75,2.25 + 0.25);
		\draw (1.75 - 0.25,2.75) -- (1.75 + 0.25,2.75);  
		\draw (1.75, 2.75 - 0.25) -- (1.75,2.75 + 0.25);
		\draw (1.25 - 0.25,2.75) -- (1.25 + 0.25,2.75);  
		\draw (1.25, 2.75 - 0.25) -- (1.25,2.75 + 0.25);
		\draw (1.25 - 0.25,3.25) -- (1.25 + 0.25,3.25);  
		\draw (1.25, 3.25 - 0.25) -- (1.25,3.25 + 0.25);
		\draw (0.75 - 0.25,3.25) -- (0.75 + 0.25,3.25);  
		\draw (0.75, 3.25 - 0.25) -- (0.75,3.25 + 0.25);
		\draw (0.75 - 0.25,3.75) -- (0.75 + 0.25,3.75);  
		\draw (0.75, 3.75 - 0.25) -- (0.75,3.75 + 0.25);
		%%%%%%%%%%%%%%%%%%%% right portion %% smallest crosses
		\draw (8.25 - 0.25,3.75) -- (8.25 + 0.25,3.75);  
		\draw (8.25, 3.75 - 0.25) -- (8.25,3.75 + 0.25);    
		\draw (8.25 - 0.25,3.25) -- (8.25 + 0.25,3.25);  
		\draw (8.25, 3.25 - 0.25) -- (8.25,3.25 + 0.25);
		\draw (7.75 - 0.25,3.25) -- (7.75 + 0.25,3.25);  
		\draw (7.75, 3.25 - 0.25) -- (7.75,3.25 + 0.25);   
		\draw (7.75 - 0.25,2.75) -- (7.75 + 0.25,2.75);  
		\draw (7.75, 2.75 - 0.25) -- (7.75,2.75 + 0.25);
		\draw (7.25 - 0.25,2.75) -- (7.25 + 0.25,2.75);  
		\draw (7.25, 2.75 - 0.25) -- (7.25,2.75 + 0.25);
		\draw (7.25 - 0.25,2.25) -- (7.25 + 0.25,2.25);  
		\draw (7.25, 2.25 - 0.25) -- (7.25,2.25 + 0.25);
		\draw (7.75 - 0.25,2.25) -- (7.75 + 0.25,2.25);  
		\draw (7.75, 2.25 - 0.25) -- (7.75,2.25 + 0.25);
		\draw (7.75 - 0.25,1.75) -- (7.75 + 0.25,1.75);  
		\draw (7.75, 1.75 - 0.25) -- (7.75,1.75 + 0.25);
		\draw (8.25 - 0.25,1.75) -- (8.25 + 0.25,1.75);  
		\draw (8.25, 1.75 - 0.25) -- (8.25,1.75 + 0.25);
		\draw (8.25 - 0.25,1.25) -- (8.25 + 0.25,1.25);  
		\draw (8.25, 1.25 - 0.25) -- (8.25,1.25 + 0.25);
		%%%%%%%%%%%%%%%%%% bottom %% smallest crosses      
		\draw (3.75 - 0.25,0.25) -- (3.75 + 0.25,0.25);  
		\draw (3.75, 0.25 - 0.25) -- (3.75,0.25 + 0.25);
		\draw (4.25 - 0.25,0.25) -- (4.25 + 0.25,0.25);  
		\draw (4.25, 0.25 - 0.25) -- (4.25,0.25 + 0.25);
		\draw (4.75 - 0.25,0.25) -- (4.75 + 0.25,0.25);  
		\draw (4.75, 0.25 - 0.25) -- (4.75,0.25 + 0.25);
		\draw (5.25 - 0.25,0.25) -- (5.25 + 0.25,0.25);  
		\draw (5.25, 0.25 - 0.25) -- (5.25,0.25 + 0.25);
		\draw (4.25 - 0.25,0.75) -- (4.25 + 0.25,0.75);  
		\draw (4.25, 0.75 - 0.25) -- (4.25,0.75 + 0.25);
		\draw (4.75 - 0.25,0.75) -- (4.75 + 0.25,0.75);  
		\draw (4.75, 0.75 - 0.25) -- (4.75,0.75 + 0.25);
		%%%%%%%%%%%%%%%%%% bottom %% smallest crosses 
		\draw (3.75 - 0.25,4.75) -- (3.75 + 0.25,4.75);  
		\draw (3.75, 4.75 - 0.25) -- (3.75,0.25 + 4.75);
		\draw (4.25 - 0.25,4.75) -- (4.25 + 0.25,4.75);  
		\draw (4.25, 4.75 - 0.25) -- (4.25,4.75 + 0.25);	
		\draw (4.75 - 0.25,4.75) -- (4.75 + 0.25,4.75);  
		\draw (4.75, 4.75 - 0.25) -- (4.75,4.75 + 0.25);	
		\draw (5.25 - 0.25,4.75) -- (5.25 + 0.25,4.75);  
		\draw (5.25, 4.75 - 0.25) -- (5.25,4.75 + 0.25);	
		\draw (4.25 - 0.25,4.25) -- (4.25 + 0.25,4.25);  
		\draw (4.25, 4.25 - 0.25) -- (4.25,4.25 + 0.25);	
		\draw (4.75 - 0.25,4.25) -- (4.75 + 0.25,4.25);  
		\draw (4.75, 4.25 - 0.25) -- (4.75,4.25 + 0.25);       
	\end{tikzpicture} 
	\caption{The figure on the left shows a portion of  
		infinite domains  $\Omega^1$ (bounded by blue 
		line segments) and $\Omega^2$ (bounded by 
		red line segments) satisfying 
		Assumption A.  
		The grid lines of 
		$\mathcal{G}_2 ^0$, $\mathcal{G}_2 ^1$ and  
		$\mathcal{G}_2 ^2$ are depicted as solid, dashed 
		and dotted lines, respectively.   	
		The figure on the right shows the 
		corresponding portion of
		a hierarchical mesh generated by a 
		nested sequence of domains 
		$\Omega^0 = \mathbb{R}^2 \supseteq \Omega^1 \supseteq 
		\Omega^2 \supseteq \Omega^3 = \emptyset$.}
	    \label{tmesh_example1}
\end{figure}
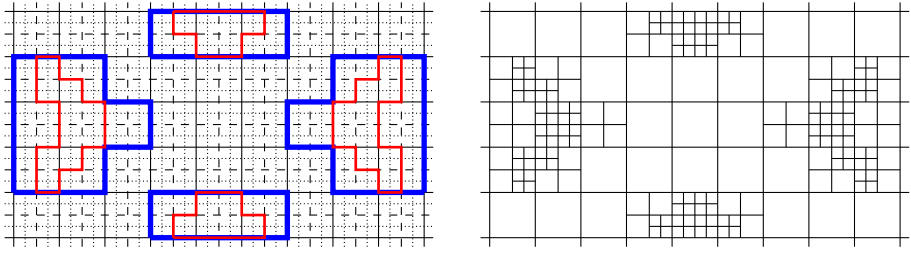   
We denote by $\mathcal{T}_d$ 
the collection of  $d$--dimensional 
cells 
%in $\mathcal{T}$:
$\mathcal{T}_d = 
\bigcup\limits_{\ell = 0}^{N-1} R^\ell$.

\begin{definition} 	
	\label{spline_space_def}    
	For a given hierarchical mesh 
	$\mathcal{T}$
	we denote by $\mathcal{S}_m (\mathcal{T})$ the space of 
	functions $f: \mathbb{R}^d \rightarrow \mathbb{R}$ of the class $C^{m-1}$ which are polynomials of 
	multi--degree $(m,\dots,m)$ in every cell from 
	$\mathcal{T}_d$. That is, for every 
	$c \in \mathcal{T}_d$, 
	$f|_c = \sum\limits_{i_1,\dots,i_d =0}^{m}
	a_{i_1\dots i_d} x_1^{i_1} \dots x_d^{i_d}$.  
	A function from $\mathcal{S}_m (\mathcal{T})$ 
	is called a spline (with highest
	order of smoothness)  	
	over a hierarchical mesh $\mathcal{T}$. 
\end{definition}     

\noindent For a given $\ell = 0, \dots, N-1$ 
let  $\mathcal{M}^\ell = 
\Omega^0 \setminus 
\mathring{\Omega}^{\ell+1} = 
\mathbb{R}^d \setminus \mathring{\Omega}^{\ell+1}$.  
Then we have a nested sequence of
closed domains:  
\begin{equation}
	\label{nested_rings_M} 
	\emptyset =\mathcal{M}^{-1} \subseteq 
	\mathcal{M}^0 
	\subseteq  \mathcal{M}^1 
	\subseteq  
	\dots  
	\subseteq 
	\mathcal{M}^{N-2}
	\subseteq   
	\mathcal{M}^{N-1} = \mathbb{R}^d.              
\end{equation}     
By Assumption A, 
each domain $\mathcal{M}_\ell$
is composed of the cells from $\mathcal{C}^\ell_d$ for
$\ell = 0, \dots, N-1$.                
Now we are ready to formulate  
Theorem 
\ref{hierarchical_mesh_basis_thm}
shown by 
Mokri\v{s}, J\"{u}ttler and  Giannelli
\cite{MokrisJutler14}
which states that 
if the domains \eqref{nested_rings_M} 
satisfy the following Assumption B, 
then each spline $f \in \mathcal{S}_m (\mathcal T)$ 
can be uniquely written as the sum $f = \sum\limits_{\delta \in \mathcal{K}}  \lambda_\delta \delta$, 
where $\mathcal{K}$  is the collection 
of the hierarchical tensor product B--splines obtained by Kraft's selection mechanism \eqref{Kraft_selection_mechanism1}, see the details in  
Remark \ref{krafts_mechanism_remark}. 
	
\noindent {\bf Assumption B.}	
	{\it For a nested sequence of closed domains
	$   \emptyset = \mathcal{M}^{-1} \subseteq
		\mathcal{M}^0 \subseteq \mathcal{M}^1 \subseteq 
		\dots \subseteq \mathcal{M}^{N-2} \subseteq  
		\mathcal M^{N-1} = \mathbb{R}^d $ 
	we assume that for each $\ell = 0, \dots, N-2$ 
	the domain $\mathcal{M}^ \ell$ satisfies 
	the condition: for each $\beta \in B^\ell _{d,m}$,
	$\ell = 0, \dots, N-2$, 
	if the intersection 
	$\mathrm{supp}\,\beta \cap \mathcal M^\ell$
	is not empty, it must be connected.}
				
 We say that a hierarchical mesh $\mathcal T$ satisfies
Assumption B 
if it is generated by a nested sequence of domains  
$\Omega^0 = \mathbb{R}^d \supseteq \Omega^1 
\supseteq \dots  \supseteq \Omega^{N-1} 
\supseteq \Omega^N  =  \emptyset$ 
for which the domains 
$\mathcal{M}^\ell = \Omega^0 \setminus \mathring{\Omega}^{\ell+1}$, 
$\ell = 0, \dots, N-2$ satisfy 
Assumption B.    
For $\ell = 0,\dots, N-1$, let:
\begin{equation}
	\label{Kraft_selection_mechanism1}
	\mathcal{K}^\ell = \{ \beta  \in B^\ell_{d,m} \, |\,
	\mathrm{supp}\,\beta \, \cap \mathcal{M}^{\ell-1}= 
	\emptyset 
	\wedge  \mathrm{supp}\,\beta \, \cap \mathcal{M}^{\ell} \neq \emptyset\} \ \mathrm{and} \ \mathcal{K} = \bigcup\limits_{\ell=0}^{N-1} 
	\mathcal{K}^\ell. 
\end{equation}   
Each formal sum 
$\sum \limits_{\delta \in \mathcal{K}} \lambda_\delta 
\delta$ defines a function from 
$\mathcal{S}_m (\mathcal{T})$. 
Moreover, if a hierarchical mesh $\mathcal{T}$   
satisfies Assumption B, then the following theorem
holds.    

%Assume that a hierarchical mesh $\mathcal{T}$   
%satisfies Assumption B. %\ref{shape_of_domains_assumption}

\begin{theorem}[\cite{MokrisJutler14}] 
	\label{hierarchical_mesh_basis_thm}	   
	For every $f \in \mathcal{S}_m (\mathcal T)$, 
	$f = \sum\limits_{\delta \in \mathcal{K}} 
	\lambda_\delta \delta$ for some uniquely defined 
	coefficients $\lambda_\delta$.     
\end{theorem}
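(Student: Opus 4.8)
The plan is to prove the two assertions of the theorem separately: \emph{uniqueness} of the coefficients, which is equivalent to the linear independence of $\mathcal{K}$, and \emph{existence} of the representation, which is the completeness of $\mathcal{K}$ in $\mathcal{S}_m(\mathcal{T})$ and is the part that actually uses Assumption B. Throughout I would exploit that every B--spline has bounded support, so that at each point of $\mathbb{R}^d$ only finitely many $\delta \in \mathcal{K}$ are nonzero; hence every formal sum $\sum_{\delta \in \mathcal{K}}\lambda_\delta\delta$ is a well--defined locally finite pointwise sum, and all the arguments below are local, which disposes of the unboundedness of the domains.

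For uniqueness it suffices to show that $\sum_{\delta\in\mathcal{K}}\lambda_\delta\delta\equiv 0$ forces every $\lambda_\delta=0$, since subtracting two representations of the same $f$ then gives equality of coefficients. I would peel off the coefficients from coarse to fine. On the interior $\mathring{\mathcal{M}}^0$ of $\mathcal{M}^0=\mathbb{R}^d\setminus\mathring{\Omega}^1$ every $\delta\in\mathcal{K}^\ell$ with $\ell\geqslant 1$ vanishes, because by \eqref{Kraft_selection_mechanism1} its support is disjoint from $\mathcal{M}^{\ell-1}\supseteq\mathcal{M}^0$; hence there the vanishing sum reduces to a combination of level--$0$ tensor product B--splines. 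Local linear independence of $B^0_{d,m}$ then forces $\lambda_\delta=0$ for every $\delta\in\mathcal{K}^0$, as each of these has support meeting $\mathring{\mathcal{M}}^0$. Deleting the level--$0$ terms and repeating the argument successively on $\mathcal{M}^1,\mathcal{M}^2,\dots$ annihilates the remaining coefficients level by level. This half uses only the nestedness of the $\mathcal{M}^\ell$ and local linear independence, not Assumption B.

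For existence I would build the coefficients by the same coarse--to--fine peeling, now maintaining the invariant that the remainder $g_{\ell-1}:=f-\sum_{j<\ell}\sum_{\delta\in\mathcal{K}^j}\lambda_\delta\delta$ vanishes on $\mathcal{M}^{\ell-1}$ (vacuous for $\ell=0$, since $\mathcal{M}^{-1}=\varnothing$). Given this, $g_{\ell-1}$ is a $C^{m-1}$ function that vanishes on $\mathcal{M}^{\ell-1}$ and is a polynomial of multidegree $(m,\dots,m)$ on each level--$\ell$ cell; since $\mathcal{M}^\ell$ is composed of cells from $\mathcal{C}^\ell_d$, the restriction $g_{\ell-1}|_{\mathcal{M}^\ell}$ is a level--$\ell$ spline. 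I would represent this restriction by the level--$\ell$ B--splines of $\mathcal{K}^\ell$, take $\lambda_\delta$ to be the resulting coefficients (unique by local linear independence of $B^\ell_{d,m}$), and set $g_\ell=g_{\ell-1}-\sum_{\delta\in\mathcal{K}^\ell}\lambda_\delta\delta$. Since each $\delta\in\mathcal{K}^\ell\subseteq\mathcal{S}_m(\mathcal{T})$, the new remainder stays in $\mathcal{S}_m(\mathcal{T})$; if it vanishes on all of $\mathcal{M}^\ell$ the invariant is restored, and after the final step $g_{N-1}\equiv 0$ (because $\mathcal{M}^{N-1}=\mathbb{R}^d$) yields $f=\sum_{\delta\in\mathcal{K}}\lambda_\delta\delta$.

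The main obstacle --- and the only place Assumption B enters --- is to show that $g_{\ell-1}|_{\mathcal{M}^\ell}$ is representable using exactly the B--splines of $\mathcal{K}^\ell$, so that after subtraction $g_\ell$ vanishes on the whole of $\mathcal{M}^\ell$ rather than on part of it. The delicate B--splines are those $\beta\in B^\ell_{d,m}$ with $\mathrm{supp}\,\beta\cap\mathcal{M}^{\ell-1}=\varnothing$ and $\mathrm{supp}\,\beta\cap\mathcal{M}^\ell\neq\varnothing$, whose coefficient must be determined from the behaviour of $g_{\ell-1}$ on $\overline{\mathrm{supp}\,\beta}\cap\mathcal{M}^\ell$ alone. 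If this intersection were disconnected, the polynomial pieces of $g_{\ell-1}$ on its different components could force incompatible values of a single coefficient $\lambda_\beta$, and $f$ would fail to lie in the span; the connectedness required by Assumption B is exactly what guarantees that one coefficient cancels $g_{\ell-1}$ consistently over the entire support, and that the B--splines straddling into $\mathcal{M}^{\ell-1}$ (which are excluded from $\mathcal{K}^\ell$) are genuinely not needed, because $g_{\ell-1}$ vanishes to order $m-1$ across $\partial\Omega^\ell$. Making this cancellation rigorous --- tracking the B--spline coefficients of $g_{\ell-1}$ near the boundary $\partial\mathring{\Omega}^{\ell+1}$ and using connectedness to propagate each coefficient across its support --- is the technical heart of the completeness statement, for which I would follow \cite{MokrisJutler14}.
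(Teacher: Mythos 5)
The paper does not actually prove this theorem in the text: it imports it from \cite{MokrisJutler14} and only remarks afterwards that the proof there survives dropping the boundedness of the domains $\Omega^\ell$, which is precisely the locality-of-supports point you make at the outset. Your sketch is a faithful and correct outline of that reference's argument (uniqueness from local linear independence of each $B^\ell_{d,m}$ plus coarse-to-fine peeling over the nested $\mathcal{M}^\ell$, existence from the same peeling with Assumption B guaranteeing that a single coefficient per $\beta\in\mathcal{K}^\ell$ is consistent across the connected set $\overline{\mathrm{supp}\,\beta}\cap\mathcal{M}^\ell$), and since you defer the technical core to the same reference the paper cites, this is essentially the same approach.
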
	
		
	\begin{remark}
		\label{krafts_mechanism_remark}	
		The equation \eqref{Kraft_selection_mechanism1} 
		defines a procedure usually known  
		as Kraft's selection mechanism for generating 
		basis functions. 
		Informally, it can be described as follows. 
		At the first iteration this mechanism 
		takes all 
		tensor product B--splines from  
		$B^0_{d,m}$ (they are all tensor product 
		B--splines with respect to the grid 
		$\mathcal{G}^0_d$ with the support overlapping with the 
		domain $\Omega^0 = \mathbb{R}^d$). 
		At the second iteration it removes 
		all tensor product B--splines 
		with the support in the domain $\Omega^1$
		obtained at the previous iteration 
		and add tensor product B--splines from  
		$B^1_{d,m}$ with the support in the domain $\Omega^1$. 
		At the third iteration it removes 
		all tensor product B--splines 
		with the support in the domain $\Omega^2$
		obtained at the previous iteration 
		and add tensor product B--splines from  
		$B^2_{d,m}$ with the support in the domain $\Omega^2$
		and etc. The process stops after the $N$th iteration.         
	\end{remark}

	\subsection{Multitape synchronous finite automata}
	\label{finite_automata_section}
	
	In this subsection we will recall the notion 
	of a finite automaton, 
	a regular language, 
	a multitape synchronous 
	finite automaton and a 
	finite automata recognizable 
	(FA--recognizable)
	relation. 
	For an introduction to automata theory and formal languages the reader 
	is referred to \cite{HopcroftUllman}.

	Let $\Sigma$ be a finite alphabet.
	A string (word) $w$ over the alphabet 
	$\Sigma$ is a finite sequence of symbols 
	$ \sigma_1 \sigma_2 \dots \sigma_n$, where  
	$n$ is a nonnegative integer and 
	$\sigma_i \in \Sigma$ for $i=1,\dots,n$.   
	If $n=0$, $w$ is the empty string which 
	we denote by $\varepsilon$. 
	We denote by $|w|$ the length of the string 
	$w$: $|w| = n$.
	The set of all strings over the alphabet $\Sigma$ 
	is denoted by $\Sigma^*$.

	A (nondeterministic)
	finite automaton 
	$\mathcal{M}$ over the alphabet 
	$\Sigma$ consists of a finite set of states 
	$S$, a set of initial states $I \subseteq S$, 
	a transition function 
	$T: S \times \Sigma \rightarrow \mathcal{P}(S)$ 
	and a set of accepting states $F \subseteq S$,
	where $\mathcal{P}(S)$ is the 
	set of all subsets of $S$.
	It is said that
	the automaton $\mathcal{M}$ accepts a string 
	$w = \sigma_1 \dots \sigma_n$ if there exists a 
	sequence of states $s_1,\dots, s_{n+1} \in S$ for 
	which $s_1 \in I$, 
	$s_{i+1} \in T(s_i,\sigma_i)$ for all 
	$i =1,\dots,n$ and $s_{n+1} \in F$. 
	A language 
	$L \subseteq \Sigma^*$ is said to be recognized 
	by the finite automaton $\mathcal{M}$ if $L$ consists of 
	all strings accepted by $\mathcal{M}$.  
	A language is said to be
	regular if it is recognized by a finite automaton.  
	The finite automaton $\mathcal{M}$ is called deterministic 
	if $I$ has exactly one element and for each state $s \in S$ and each symbol 
	$\sigma \in \Sigma$ the set $T(s,\sigma)$ has 
	exactly one element.  
	Nondeterministic and deterministic 
	finite automata recognize the class of 
	all regular languages.

	We denote by 
	$\Sigma_\diamond$ 
	the alphabet $\Sigma_\diamond = \Sigma \cup \{\diamond\}$,
	where it is always assumed that the padding symbol 
	$\diamond$ is not in the alphabet $\Sigma$.   
	We denote by $\Sigma^k_\diamond$ the Cartesian product 
	of $k$ copies of $\Sigma_\diamond$. 
	Let $w_1,\dots,w_k \in \Sigma^*$ be some
	strings over the alphabet $\Sigma$. 
	The convolution  
	of the strings $w_1,\dots,w_k$ 
	is the string $w = w_1 \otimes \dots \otimes  w_k$ 
	over the alphabet  
	$\Sigma_\diamond ^k \setminus 
	\{\left(\diamond, \dots, \diamond\right)\}$
	such that for the $i$th symbol 
	$(\sigma_i ^1,\dots,\sigma_i ^k)$ of $w$ 
	the symbol $\sigma_i ^j$  is the $i$th symbol of
	$w_j$ if $i \leqslant |w_j|$ and $\diamond$, otherwise, 
	for $i= 1,\dots, |w|$ 
	and $j =1,\dots, k$, where 
	$|w| =  \max \{|w_j| \,|\, j=1,\dots,k \}$.  
	For example, the convolution  
	of three strings 
	$w_1= 0001101$, 
	$w_2= 10100101110$
	and $w_3 = 100101$ is as follows: 
	\begin{equation*} 
	\begin{small}	
		w_1 \otimes w_2 \otimes w_3 = 
		\begin{array}{lllllllllll} 
			0 & 0 & 0 & 1 & 1 & 0 & 1 & \diamond & \diamond & \diamond & \diamond    \\ 
			1 & 0 & 1 & 0 & 0 & 1 & 0 & 1 & 1 & 1 & 0 \\
			1 & 0 & 0 & 1 & 0 & 1 & \diamond & \diamond & 
			\diamond & \diamond & \diamond
		\end{array}.
    \end{small}	
	\end{equation*}	
	For a given relation 
	$R \subseteq \Sigma^{*k}$ we denote by 
	$\otimes R$ the following language of convolutions of strings over the alphabet $\Sigma_\diamond ^k \setminus 
	\{\left(\diamond, \dots, \diamond\right)\}$:
	\begin{equation*}
		\otimes R = 
		\{w_1 \otimes \dots \otimes w_k \, |\, 
		(w_1,\dots,w_k) \in R\} 
		\subseteq \left(\Sigma_\diamond ^k \setminus 
		\{\left(\diamond, \dots, \diamond\right)\}\right)^*.
	\end{equation*}    
	It is said that a relation $R$ is  
	FA--recognizable if the language $\otimes R$ is regular.
	One can think of a finite automaton 
	recognizing $\otimes R$ as a read--only 
	$k$--tape Turing machine with the 
	input $w_i$ written 
	on the $i$th tape for each $i=1,\dots,k$  
	and $k$ heads moving 
	synchronously from the left to the right 
	until the whole input is read. After that 
	the input is either accepted or rejected.     
	Such automaton is called a $k$--tape 
	synchronous finite automaton. 
	
	Let $f : D \rightarrow \Sigma^{*m}$ be a function from 
	$D \subseteq \Sigma^{*n}$ to
	$\Sigma^{*m}$ for integers $n,m \geqslant 1$. 
	We denote 
	by $\mathrm{Graph}(f)$  
	the graph of the function $f$:
	$$\mathrm{Graph}(f) = 
	\{(\overline u, \overline v) 
	\in \Sigma^{*n} \times \Sigma^{*m} \, | 
	f(\overline u) = \overline v \} 
	\subseteq \Sigma^{*(n+m)}.$$     
	It is said that the function 
	$f : D \rightarrow \Sigma^{*m}$ is 
	FA--recognizable (automatic), 
	if $\mathrm{Graph}(f)$ is FA--recognizable.  
	%Clearly, if $f$ is FA--recognizable, 
	%$D \subseteq \Sigma^{*n}$ must be FA--recognizable.
	For an automatic function 
	$f: D \rightarrow \Sigma^{*m}$
	there exists a linear--time algorithm which 
	for a given input $\overline{u} \in \Sigma^{*n}$ returns 
	the output $\overline{v} = f(\overline{u}) \in \Sigma^{*m}$, 
	see \cite[Theorem~2.3.10]{Epsteinbook}.
	Moreover, an automatic function  
	can be computed on a deterministic
	(position--faithful) one--tape 
	Turing machine in linear time \cite{Stephan_lmcs_13}. 
	The same is true if 
	$f : D \rightarrow \Sigma^{*m}$ is a 
	multivalued function such that the number 
	of values that $f$ can take for each 
	$\overline{u} \in D$ is bounded from above 
	by some fixed constant: all values 
	$f(\overline{u})$ can be computed  
	on a one--tape Turing machine in linear time.

	\subsection{FA--presented structures}
	\label{fa-presentable_structures}
	
	In this subsection we introduce the 
	notion of a FA--presented structure 
	and recall Theorem \ref{fa_presentable_structures_theorem}
	from the field of FA--presentable 
	structures.
	This theorem is a key ingredient  
	to be used in Sections 
	\ref{regular_hierarchical_mesh_section} 
	and \ref{regular_hierarchical_splines_section}.
	The pioneering work linking decidability 
	of the first order theory and finite automata   
	is due to Hodgson 
	\cite{Hodgson83}.  
	The  systematic study of FA--presentable structures  was initiated  by Khoussainov and Nerode  \cite{KhoussainovNerode95} and, independently, by Blumensath and Gr\"{a}del \cite{BlumensathGradel04}.

	We first recall the notion of a 
	FA--presented structure 
	as it is defined in \cite{KhoussainovNerode95}.
	Let  $\mathcal{A} = \left(A; R_1^{m_1},\dots,R_s^{m_s}, 
	f_1 ^{n_1}, \dots, f_r ^{n_r}, c_1,\dots, c_t\right) $
	be a structure, where $A \subseteq \Sigma^*$ for 
	some alphabet $\Sigma^*$, $R_i^{m_i} \subseteq A^{m_i}$ 
	are $m_i$--ary relations for $i=1,\dots,s$, 
	$f_j ^{n_j} : A^{n_j} \rightarrow A$ are $n_j$--ary functions for $j=1,\dots,r$, 
	and $c_k$ are constants for $k=1,\dots,t$. The structure 
	$\mathcal{A}$ is said to be FA--presented
	if $A$ is regular, 
	the relations $R_i^{m_i}$ and the functions 
	$f_j ^{n_j}$ are FA--recognizable for all 
	$i=1,\dots,s$ and $j=1,\dots,r$.
	A structure is said to be FA--presentable if 
	it is isomorphic to a FA--presented structure.  
	
	FA--presented structures enjoy the following 
	fundamental property. 
	%\cite[Corollary~4.2]{KhoussainovNerode95}
	There exists an effective procedure 
	that for a given first order definition of 
	a relation $R$ of a FA--presented structure 
	$\mathcal{A}$ yields an algorithm deciding $R$. 
%	The first order theory of 
%	a FA--presented structure
%	$\mathcal{A}$ is decidable. 
	The proof of this  property can be obtained from the standard facts in automata theory which are summarized 
	in Theorem \ref{fa_presentable_structures_theorem}. 
	\begin{theorem}(\cite{KhoussainovNerode95})
		\label{fa_presentable_structures_theorem}    
		(1) Let $R_1,R_2$ and $R$ be FA--recognizable relations. 
		Then the relations corresponding to the 
		expressions $\left(R_1 \lor R_2\right)$, 
		$\left(R_1 \land R_2\right)$, 
		$\left(R_1 \to R_2 \right)$,  
		$\left(\neg R_1 \right)$, $\exists v R$	
		and $\forall v R$ are also FA--recognizable, 
		where  for a $k$--ary relation 
		$R(v_1,\dots,v_k)$, for $k>1$, and a variable $v_i$, $i=1,\dots,k$:  
		\begin{equation*}
			\exists v_iR  = 
			\{\left(v_1,\dots,v_{i-1}, 
			v_{i+1}, \dots, v_k \right) \,|\,
			\left(v_1,\dots,v_{i-1}, v_i, 
			v_{i+1}, \dots, v_k \right) \in R\},
		\end{equation*} 
		\begin{equation*}
			\forall v_iR  = 
			\{\left(v_1,\dots,v_{i-1}, 
			v_{i+1}, \dots, v_k \right) \,|\,
			\forall v_i \in A \left(\left(v_1,\dots,v_{i-1}, v_i, 
			v_{i+1}, \dots, v_k \right) \in R \right)\}.
		\end{equation*}  
		(2) The emptiness problem for  
		finite automaton is decidable. 
		That is, for a unary FA--recognizable relation $R(v)$ there is an algorithm 
		which for a given deterministic finite automaton 
		accepting $R$ decides whether $\exists v R$  
		is true or false. Similarly, there is
		an algorithm deciding whether  
		$\forall v R$  is true or false.        
		
		\noindent (3) There exists a procedure which for 
		deterministic multi--tape synchronous finite automata 
		recognizing $R_1,R_2$ and a $k$--ary 
		relation $R(v_1,\dots,v_k)$, for $k>1$, constructs 
		deterministic multi--tape synchronous finite
		automata for recognizing the relations 
		corresponding to the expressions  
		$(R_1 \lor R_2)$, $(R_1 \land R_2)$, $(R_1 \to R_2)$,  
		$(\neg R_1)$, $\exists v_i R$ and $\forall v_i R$ 
		for $i=1,\dots,k$.      
	\end{theorem}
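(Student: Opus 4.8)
The plan is to reduce all three parts to the standard closure and decision properties of regular languages over the convolution alphabet $\Sigma_\diamond^k \setminus \{(\diamond,\dots,\diamond)\}$, handling each connective in turn. Throughout I identify a $k$-ary FA--recognizable relation $R$ with the regular language $\otimes R$ and fix deterministic finite automata recognizing the given relations. For the Boolean part the one genuine subtlety is negation: the complement must be taken relative to the set of \emph{valid} convolutions rather than relative to all words over the convolution alphabet, since once a track emits $\diamond$ it must emit $\diamond$ thereafter. So I would first exhibit an explicitly constructible deterministic automaton for the language $\otimes(\Sigma^{*k})$ of all valid convolutions, and then set $\otimes(\neg R_1) = \otimes(\Sigma^{*k}) \cap \overline{\otimes R_1}$, realized on automata by the subset construction together with accept/reject swapping and the product construction. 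For $(R_1 \lor R_2)$ and $(R_1 \land R_2)$, taken over a common variable tuple (after cylindrification if the variable sets differ), I would use $\otimes(R_1 \lor R_2) = \otimes R_1 \cup \otimes R_2$ and $\otimes(R_1 \land R_2) = \otimes R_1 \cap \otimes R_2$, each obtained by the usual product automaton, while $(R_1 \to R_2)$ reduces to the already treated cases via $(\neg R_1) \lor R_2$.

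The quantifiers are the technically most delicate part, and the projection step is where I expect the main obstacle to lie. For $\exists v_i R$ I would erase the $i$-th track by the length-preserving coordinate-projection and take the existential image, but a projected column can become $(\diamond,\dots,\diamond)$ precisely when only the erased string $w_i$ survives; by the padding-consistency of valid convolutions these spurious columns can only occur as a trailing block, namely when $|w_i|$ exceeds the lengths of all remaining tracks. I would therefore delete this trailing block of all-$\diamond$ columns, an operation that preserves regularity, and then apply the subset construction to recover a deterministic automaton from the nondeterministic one produced by the projection. The universal quantifier is then obtained by the duality $\forall v_i R \equiv \neg\,\exists v_i\,(\neg R)$, composing the negation and projection constructions.

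For part (2) I would reduce both decision problems to the emptiness test for finite automata, which is decidable by searching the transition graph for a path from an initial state to an accepting state. The statement $\exists v\,R$ is true exactly when the automaton for $R$ accepts some word, i.e.\ when $R$ is nonempty; and $\forall v\,R$, which asserts $R(v)$ for every $v$ in the regular domain $A$, holds iff $A \setminus R = \varnothing$, a regular language whose emptiness is again decidable. Part (3) then demands no new idea: every construction invoked in parts (1) and (2) — product, complementation, projection, trailing-padding removal, and the subset construction — is an explicit algorithm transforming input automata into output automata, so the entire pipeline is effective and produces the deterministic multi--tape synchronous automata claimed in the statement.
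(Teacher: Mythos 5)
Your proposal is correct and follows essentially the same route as the paper's sketch: reduce everything to the three primitives $(R_1\land R_2)$, $(\neg R_1)$ and $\exists v_i R$, handled respectively by the product construction, complementation of a deterministic automaton, and projection followed by the Rabin--Scott powerset construction, with $\forall = \neg\circ\exists\circ\neg$ and part (2) reduced to the decidable emptiness test. You are in fact somewhat more careful than the paper, which silently elides the two padding subtleties you flag (complementing relative to the set of valid convolutions, and stripping the trailing all-$\diamond$ columns produced by projection); both are genuine issues and your treatment of them is right.
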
	
	A brief sketch of the proof 
	of Theorem \ref{fa_presentable_structures_theorem} 
	is as follows. 
	Part (1) follows from part (3). 
	Part (2) for $\exists v R$ is a standard fact from the automata theory \cite[Theorem~3.7]{HopcroftUllman}.
	% Moreover, there is an algorithm deciding 
	% the emptiness problem in time $O(n^2)$,  
	% where $n$ is the number of states of a given  
	% deterministic finite automaton accepting $R$. 
	For $\forall v R$ it follows from 
	the equivalency of $\forall$ and the 
	composition  $\neg \circ \exists \circ \neg$ 
	(see the same argument used in the proof of 
	\cite[Theorem~1.4.6]{Epsteinbook}).    
	As for part (3), it is enough to show it 
	only for the expressions $(R_1 \land R_2)$, 
	$(\neg R_1)$ and $\exists v R$. 
	For the expression $(R_1 \land R_2)$  it follows 
	from the product construction for a deterministic finite automaton 
	accepting the intersection of two regular languages. 
	For $(\neg R_1)$ it follows from a construction 
	of a deterministic finite automaton accepting the complement 
	of a given regular language by swapping accepting and 
	non--accepting states.  
	For $\exists v_i R$ it follows from 
	the standard Rabin--Scott powerset construction 
	for converting a nondeterministic finite automaton 
	into deterministic finite automaton
	\cite[Theorem~2.1]{HopcroftUllman}.

	\section{Regular Hierarchical Meshes}   
	\label{regular_hierarchical_mesh_section}

	In this section first we introduce the encoding of  hierarchical meshes as finite automata, see the subsection
	\ref{how_to_encode_meshes_subsec}. This encoding is an integral part of the encoding of splines over hierarchical 
	meshes as finite automata 
	to be introduced in Section \ref{regular_hierarchical_splines_section}.      
	Then we describe the 
	polynomial--time procedures for verifying  
	the nestedness of the domains generating
	a hierarchical mesh, for verifying  
	the geometric constraints given by 
	Assumption B %\ref{shape_of_domains_assumption}
	and for selecting tensor product 
	B--splines according to Kraft's selection mechanism, in the subsections 
	\ref{verification_of_nestedness_subsection}, 
	\ref{domain_condition_verification_subsection} 
	and \ref{regularity_for_K_subsection},
	respectively.

	\subsection{Encoding 
	hierarchical meshes}
	\label{how_to_encode_meshes_subsec}
	
	Let $b$ be a positive integer divisible by $2$.  
	We denote by $\mathbb{Z}\left[1/b \right]$  the 
	abelian group of all rational numbers of the form 
	$\frac{s}{b^\ell}$ for $s,\ell \in \mathbb{Z}$ and 
	$\ell \geqslant 0$. 
	Each positive $z \in \mathbb{Z}[1/b]$ 
	can be uniquely represented as the sum of its integral and fractional parts:
	\begin{equation}  
		\label{binary_decomposition}   
		z  = \left[z\right]_i + 
		\left[z\right]_f = 
		\sum\limits_{i=1}^{k}  
		\alpha_i b^{i-1} + \sum\limits_{i=1}^{k} 
		\beta_i b^{-i},
	\end{equation}  
	where $\alpha_i,\beta_j \in \{0,1,\dots,b-1\}$ 
	for all $i = 1, \dots, k$  
	for which either $\alpha_{k} \neq 0$ or 
	$\beta_k \neq 0$.    	
	Let $\Sigma_b$ be the alphabet consisting of 
	the symbols $\alpha \choose \beta$,
	where $\alpha,\beta \in \{0,1,\dots,b-1\}$.  
	Now, for a given positive $z \in \mathbb{Z}[1/b]$ 
	we represent it as a string:  
	\begin{equation} 
	\label{binary_representation_eq1}	
	  0\, \alpha_1 \, \alpha_2 \dots \alpha_ k \choose 
	  0\, \beta_1 \, \beta_2 \dots \beta_k
	\end{equation}	
	over the alphabet $\Sigma_b$.  
	The first symbol   
	$0 \choose 0$ 
	indicates that $z$ is positive.  
	Let $z \in \mathbb{Z}[1/b]$ be negative
	and $-z = 
	\sum\limits_{i=1}^{k'} \alpha_i ' b^{i-1} + 
	\sum\limits_{i=1}^{k'} \beta_i ' b^{-i}$   be the 
	decomposition of the 
	form \eqref{binary_decomposition} for $-z > 0$.  
	We represent $z$ as a string:  
    \begin{equation} 
    \label{binary_representation_eq2}
    1 \, \alpha_1' \, \alpha_2'  \dots \alpha_{k'}' 
    \choose
    1 \, \beta_1' \, \beta_2' \dots \beta_{k'}'
    \end{equation}	
	over the alphabet $\Sigma_b$. The first symbol  
	$1 \choose 1$ indicates that $z$ is negative.

	For $z \in \mathbb{Z}[1/b]$ we denote 
	by $(z)_b \in \Sigma_b^*$ the string \eqref{binary_representation_eq1}, if $z>0$, and 
	the string \eqref{binary_representation_eq2} if $z<0$ and 
	the string 
	$0 \choose 0$, if $z=0$. 
	For example, if $z = - \frac{27}{8}$, then 
	$(z)_2 =$ $1 \, 1 \, 1 \, 0 \choose
	           1 \, 0 \, 1 \, 1$. 
	The language $\mathcal{L}_b = 
	\{(z)_b \, |\, z \in \mathbb{Z}[1/b]\}$ is regular. 
	We denote by 
	$\psi_{b}: \mathcal{L}_b \rightarrow \mathbb{Z}[1/b]$ 
	the bijection which maps a string 
	$(z)_b \in \mathcal{L}_b$ to 
	$z \in \mathbb{Z}[1/b]$. For both cases, 
	$\mathcal{L}_b$ and $\psi_b$,  
	the subscript indicates the base $b$.   
	For $b=2$, up to minor modification, 
	the representation $\psi_2$
    coincides with the 
	representation of $\mathbb{Z}[1/2]$ described in 
	\cite[\S~2]{NiesSemukhin07}.    
	We denote by $Add$ the graph of the addition operation 
	in $\mathbb{Z}[1/b]$ with respect to $\psi_b$, that is,  
	$Add= \{(u,v,w) \in \mathcal{L}_b 
	\times \mathcal{L}_b \times \mathcal{L}_b \,|\,
	\psi_b (u) + \psi_b (v) = \psi_b (w)\}$.  
	The relation $Add$ is FA--recognizable \cite[\S~2]{NiesSemukhin07}.  
	We denote by $add$ the addition operation 
	in $\mathcal{L}_b$, that is,   
	$add: \mathcal{L}_b \times \mathcal{L}_b \rightarrow \mathcal{L}_b$ is a two--place function for which 
	$add(u,v)=w$ if $\psi_b(u)+\psi_b(v) = \psi_b (w)$.

	For a given $d$--tuple $\overline z = (z_1,\dots,z_d) \in \mathbb{Z}\left[1/b\right]^d$   
	we denote by $(\overline z)_b$ the convolution
	$(z_1)_b \otimes \dots \otimes (z_d)_b$ of 
	strings $(z_1)_b,\dots,(z_d)_b \in \mathcal{L}_b$.   
	Clearly, the language 
	$\mathcal{L}_b ^d = 
	\{w_1 \otimes \dots \otimes w_d \, | \, w_i \in \mathcal{L}_b, 
	i=1, \dots, d\}$ is regular. We denote by 
	$\psi_b ^d: \mathcal{L}_b ^d \rightarrow 
	\mathbb{Z}[1/b]^d$
	the bijection which maps a string 
	$(\overline z)_b \in \mathcal{L}_b ^d$ 
	to $\overline z \in \mathbb{Z}[1/b]^d$.  
	For both cases, $\mathcal{L}_b ^d$ and 
	$\mathcal{\psi}_b ^d$, the superscript 
	indicates the dimension $d$.   	
	We denote by $Add_d$ the graph of the addition 
	operation in $\mathbb{Z}[1/b]^d$ with respect 
	to $\psi_b ^d$. That is, 
	$Add_d = \{(u,v,w) \in \mathcal{L}_b ^d 
	\times \mathcal{L}_b ^d \times \mathcal{L}_b ^d\,|\,
	\psi_b ^d (u) + \psi_b ^d (v) = \psi_b ^d (w)\}$.
	The relation $Add_d$ is 
	FA--recognizable.    
	We denote by $add_d$ the addition operation in 
	$\mathcal{L}_b^d$. That is, 
	$add_d: \mathcal{L}_b ^d \times \mathcal{L}_b ^d 
	\rightarrow \mathcal{L}_b ^d$ is a two--place function 
	for which  $add_d (u,v)=w$ if 
	$\psi_b ^d (u)+\psi_b ^d(v) = \psi_b ^d (w)$.  
	Clearly, %if $d=1$, then
	$\mathcal{L}_b^{1}=\mathcal{L}_b$, 
	$\psi_b ^1 = \psi_b$, $Add_1 = Add$ and $add_1 = add$.

	Let $\mathcal{T}$ be a $d$--dimensional hierarchical 
	mesh defined by a nested sequence of domains: 
	$	\Omega^0 = \mathbb{R}^d \supseteq \Omega^1 
		\supseteq \dots \supseteq \Omega^{N-1} 
		\supseteq \Omega^N = \emptyset$,   
	where $\Omega^{N-1} \neq \emptyset$ and each  $\Omega^\ell, \ell = 1, \dots, N-1$ is composed 
	of cells from $\mathcal{C}_d ^{\ell-1}$.  
	For each $d$--dimensional cube 
	$c = \prod\limits_{j=1}^d \left[t_{i_j} ^\ell, 
	t_{i_j +1} ^\ell \right]$ we associate it with its 
	barycentre 
	$\overline z_c = \left(z_1, \dots,  z_d\right)$,
	where 
	$z_j = \frac{1}{2} (t^\ell_{i_j} + t^\ell_{i_j +1})$ 
	for $j =1, \dots, d$ (see Fig.~\ref{barycenter}).     
	For each $\ell = 1, \dots, N-1$ we denote by 
	$L_\ell \subseteq \mathcal{L}_b ^d $ the language:
	$L_\ell = \{(\overline{z}_c)_b  \,|\, 
	c \in \mathcal{C}^{\ell-1}_d
	\wedge c \subseteq \Omega^\ell\}$.
	\begin{definition}
		We say that a hierarchical mesh 
		$\mathcal{T}$ is regular if for each $\ell = 1, \dots, N-1$
		the language $L_\ell$ is regular.  
	\end{definition}	   
	The languages $L_\ell$, $\ell=1,\dots, N-1$ are 
	pairwise disjoint: $L_i \cap L_j = \emptyset$ for 
	$i,j=1,\dots,N-1$ and $i \neq j$.    
	Let $L = L_1 \cup \dots \cup L_{N-1}$. 
	It can be seen that the hierarchical mesh 
	$\mathcal{T}$ is regular if and only if 
	the language $L$ is regular.
 	
	%Note that if $\Omega^1$ 
	%(the first domain defining $\mathcal{T}$) is %bounded, then $\mathcal{T}$ is regular.  
	%If $\Omega^1$ is not bounded, then 
	%$\mathcal{T}$ is regular if, informally %speaking, it looks like a regular pattern
	%(probably outside of some bounded region).
	%See Figure \ref{regular_mesh_examples} 
	%for illustration of regular hierarchical 
	%meshes when $\Omega^1$ is not bounded.  
    
    \begin{remark}
Chaudhuri, Sankaranarayanan and Vardi 
studied real functions that 
can be encoded by  automata on inﬁnite 
strings \cite{CSV13}.  
      The representation of elements in 
      $\mathbb{Z}[1/b]^d$ defined by the identities \eqref{binary_decomposition}, 
      \eqref{binary_representation_eq1} and 
      \eqref{binary_representation_eq2} is similar to the representation of points in  $\mathbb{R}^d$ used 
      in  \cite{CSV13}      
      (though in this paper we use finite strings).
      However, in principle,  one can use 
      alternative representations of elements in   $\left(\mathbb{Z}[1/b]^d; + \right)$, or other countable subgroups of $\left(\mathbb{R}^d; + \right)$,	as finite strings for which the addition is FA--recognizable, see \cite{Akiyama08,NiesSemukhin07,Frank_LATA20}.
	\end{remark}
	
	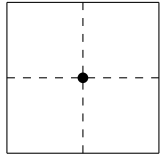
\begin{figure}[h]
		\centering
		\begin{tikzpicture}[scale=0.35] 
			\draw (0,0) -- (4, 0);     
			\draw (0,4) -- (4, 4);
			\draw (0,0) -- (0, 4);
			\draw (4,0) -- (4, 4);
			\draw [dashed] (0,2) -- (4,2);
			\draw [dashed] (2,0) -- (2,4);
			\fill[black] (2,2) circle (4pt);
		\end{tikzpicture}
		\caption{The figure shows a $2$--dimensional cell and
			its barycentre (a black dot in the centre of the cell).} 
		\label{barycenter} 
	\end{figure}
	\begin{figure}[h]
		\centering
		\begin{tikzpicture}[scale=0.56] 
			\foreach \x in {0,2,4,6,8}
			\draw (\x, -0.2) -- (\x, 6.2);
			\foreach \y in {0,2,4,6} 
			\draw (-0.2,\y) -- (8.2, \y);
			%%%%%%%%%%%%%%%%%%%%%%%%%%%%%%%%%	
			\draw (-0.2,0.5) -- (2,0.5);
			\draw (-0.2,1) -- (2,1); 
			\draw (-0.2,1.5) -- (2,1.5);
			\draw (0.5,-0.2) -- (0.5,2);
			\draw (1,-0.2) -- (1,2);
			\draw (1.5,-0.2) -- (1.5,2);   
			%%%%%%%%%%%%%%%%%%%%%%%%%%%%%%%%%	
			\draw (2,2.5) -- (4,2.5);
			\draw (2,3) -- (4,3); 
			\draw (2,3.5) -- (4,3.5);
			\draw (2.5,2) -- (2.5,4);
			\draw (3,2) -- (3,4);
			\draw (3.5,2) -- (3.5,4);   
			%%%%%%%%%%%%%%%%%%%%%%%%%%%%%%%%%	
			\draw (4,4.5) -- (6,4.5);
			\draw (4,5) -- (6,5); 
			\draw (4,5.5) -- (6,5.5);
			\draw (4.5,4) -- (4.5,6.2);
			\draw (5,4) -- (5,6.2);
			\draw (5.5,4) -- (5.5,6.2);     
			%%%%%%%%%%%%%%%%%%%%%%%%%%%%%%%%%	
			\draw (4,0.5) -- (6,0.5);
			\draw (4,1) -- (6,1); 
			\draw (4,1.5) -- (6,1.5);
			\draw (4.5,-0.2) -- (4.5,2);
			\draw (5,-0.2) -- (5,2);
			\draw (5.5,-0.2) -- (5.5,2);   
			%%%%%%%%%%%%%%%%%%%%%%%%%%%%%%%%%
			\draw (6,2.5) -- (8,2.5);
			\draw (6,3) -- (8,3); 
			\draw (6,3.5) -- (8,3.5);
			\draw (6.5,2) -- (6.5,4);
			\draw (7,2) -- (7,4);
			\draw (7.5,2) -- (7.5,4);	
			%%%%%%%%%%%%%%%%%%%%%%%%%%%%%%%%%	
			\draw (-0.2,4.5) -- (2,4.5);
			\draw (-0.2,5) -- (2,5); 
			\draw (-0.2,5.5) -- (2,5.5);
			\draw (0.5,4) -- (0.5,6.2);
			\draw (1,4) -- (1,6.2);
			\draw (1.5,4) -- (1.5,6.2);     
		\end{tikzpicture}
		\hskip5mm
		\begin{tikzpicture}[scale=0.56] 
			\foreach \x in {0,2,4,6,8}
			\draw (\x, -0.2) -- (\x, 6.2);
			\foreach \y in {0,2,4,6} 
			\draw (-0.2,\y) -- (8.2, \y);  
			\foreach \y in {1,5}
			\draw (-0.2,\y) -- (8.2,\y);
			\foreach \x in {1,3,5,7}
			{	\draw (\x,0) -- (\x,2);	
				\draw (\x,4) -- (\x,6);
			}	
			\foreach \x in {0.5,1.5,2.5,3.5,4.5,5.5,6.5,7.5}
			{   \draw (\x,1) -- (\x,2);
				\draw (\x,5) -- (\x,6);          
			}
			\draw (-0.2,1.5) -- (8.2,1.5);
			\draw (-0.2,5.5) -- (8.2,5.5);
		\end{tikzpicture} 
		\caption{The figures show portions of regular 
			$3$--level hierarchical meshes.}
		\label{regular_mesh_examples}
	\end{figure}
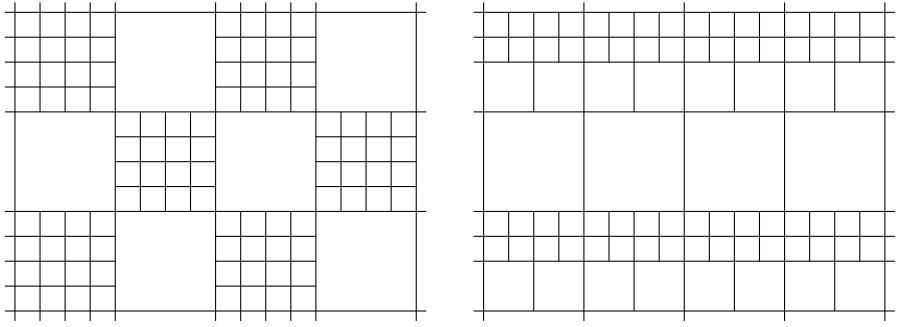

	\subsection{Verification of the nestedness}
	\label{verification_of_nestedness_subsection}
	
	In this subsection we describe a procedure 
	of verification of the nestedness of the 
	domains $\Omega^1,\dots,\Omega^{N-1}$ 
	generating a regular hierarchical mesh.
	Each domain $\Omega^i$, $i=1,\dots, N-1$
	is composed of the cells from $\mathcal{C}_d ^{i-1}$ 
	and defined by a regular language $L_i$.   
	The input is given as
	deterministic finite automata  
	$M_1, \dots, M_{N-1}$ recognizing the languages 
	$L_1,\dots,L_{N-1}$, respectively. 
	The procedure decides whether or not
	$\Omega^1 \supseteq \dots \supseteq \Omega^{N-1}$.  
	Further verification of Assumption A is not required 
	as it is satisfied by construction.

	In order to check the nestedness one has to verify that 
	for each $\ell = 2, \dots, N-1$:  
	$\Omega^{\ell} \subseteq \Omega^{\ell-1}$.  
	Let $c \in \mathcal{C}_d ^{\ell-1}$ be a cell 
	for which  $c \subseteq \Omega^\ell$
	for some $\ell$, $2 \leqslant \ell \leqslant N-1$. 
	Then $c \subseteq \Omega^{\ell-1}$ if and 
	only if there exists a cell 
	$c' \in \mathcal{C}_d ^{\ell-2}$ for which 
	$c \subseteq c'$  and $c' \subseteq \Omega^{\ell-1}$. 
	The inclusion $c \subseteq c'$ holds if and only 
	if there is a vector 
	$\overline{s} = \left(\pm\frac{1}{2^\ell},\dots,
	\pm\frac{1}{2^\ell}\right) \in 
	\mathbb{Z}[1/b]^d$ for which 
	$\overline{z}_c + \overline{s} = \overline{z}_{c'}$, that 
	is, $((\overline{z}_c)_b,(\overline{s})_b,
	(\overline{z}_{c'})_b) \in  Add_d$. 
	There are exactly $2^d$ vectors of the 
	form 
	$\left(\pm \frac{1}{2^\ell}, \dots, 
	\pm \frac{1}{2^\ell}\right)$. 
	We denote these vectors by 
	$\overline{s}^{\ell}_{1}, \dots, 
	\overline{s}^{\ell}_{k}$, 
	where $k = 2^d$. 
	Let  $s^{\ell}_{i}= 
	(\overline{s}^{\ell}_{i})_b \in 
	\mathcal{L}_b ^d$  
	for $i = 1,\dots,k$. 	
	We have that $c \subseteq c'$ if and only if 
	for a first order formula:
	\begin{equation*}
		\label{phi_l_formula_definition}   
		\Phi_\ell =
		(add_d(u, s_1^\ell) \in L_{\ell-1})  \lor
		\dots \lor  (add_d (u,s_k ^\ell) \in L_{\ell-1}),
	\end{equation*} 
	the evaluation of $\Phi_\ell$ is true 
	for $u= (\overline z_c)_b$ 
	%$v=(\overline z_{c'})_b$ 
	and the constants
	$s_1 ^\ell, \dots, 
	s_k ^\ell$. 
	Therefore,  
	$\Omega^\ell \subseteq \Omega^{\ell-1}$ if and only 
	if the following first order sentence:  
	\begin{equation*}
		\label{nestedness_formula1}   
		\Upsilon_\ell = \forall u 
		\left( \left( u \in L_{\ell} \right) \to  
		\Phi_\ell  \right)               
	\end{equation*}     
	is true for the structure 
	$\left(\mathcal{L}^d_b; add_d,L_\ell, L_{\ell-1},
	s_1 ^\ell, \dots, s_k ^\ell  \right)$.       
	Thus, we proved the following theorem. 
	\begin{theorem}  	
		\label{nestedness_thm}	  
		The sequence of domains $\Omega^1,\dots,\Omega^{N-1}$
		is nested if and only if the first order 
		sentence $\Upsilon_2 \wedge \dots \wedge \Upsilon_{N-1}$
		is true for the structure:   
		$$(\mathcal{L}^d_b;add_d, L_1, \dots, L_{N-1}, s_1 ^2,   \dots, s_k ^{N-1}).$$  	 	  
	\end{theorem}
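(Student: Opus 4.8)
The plan is to establish, for each individual level $\ell = 2, \dots, N-1$, the equivalence between the inclusion $\Omega^\ell \subseteq \Omega^{\ell-1}$ and the truth of the sentence $\Upsilon_\ell$, and then to observe that nestedness of the whole sequence is exactly the conjunction of these inclusions. Since $\Omega^0 = \mathbb{R}^d$ contains $\Omega^1$ and $\Omega^N = \varnothing$ is contained in $\Omega^{N-1}$ automatically, the sequence $\Omega^1 \supseteq \dots \supseteq \Omega^{N-1}$ is nested precisely when $\Omega^\ell \subseteq \Omega^{\ell-1}$ holds for every $\ell = 2, \dots, N-1$, which matches the conjunction $\Upsilon_2 \wedge \dots \wedge \Upsilon_{N-1}$ evaluated in the structure $(\mathcal{L}^d_b; add_d, L_1, \dots, L_{N-1}, s_1^2, \dots, s_k^{N-1})$. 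So everything reduces to the single-level claim.

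For a fixed $\ell$, I would first reduce the set inclusion to a cellwise one. Because $\Omega^\ell$ is, by Assumption A, the union of the cells $c \in \mathcal{C}_d^{\ell-1}$ it contains, we have $\Omega^\ell \subseteq \Omega^{\ell-1}$ if and only if every such constituent cell $c$ satisfies $c \subseteq \Omega^{\ell-1}$. The heart of the argument is then the geometric lemma that a fine cell $c \in \mathcal{C}_d^{\ell-1}$ lies in $\Omega^{\ell-1}$ if and only if the unique coarse cell $c' \in \mathcal{C}_d^{\ell-2}$ with $c \subseteq c'$ is itself contained in $\Omega^{\ell-1}$. The ``if'' direction is immediate from $c \subseteq c'$. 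For ``only if'', I would use the dyadic nesting $t^{\ell-2}_i = t^{\ell-1}_{2i}$: each fine cell sits inside exactly one coarse cell, and the barycentre $\overline{z}_c$ has all coordinates equal to odd multiples of $2^{-\ell}$, whereas the hyperplanes of the coarse grid $\mathcal{G}^{\ell-2}_d$ sit at multiples of $4\cdot 2^{-\ell}$. Hence $\overline{z}_c$ never lies on a coarse hyperplane and therefore lies in the interior of the single coarse cell $c'$ and in no other. Since $\Omega^{\ell-1}$ is a union of coarse cells (Assumption A) and $\overline{z}_c \in c \subseteq \Omega^{\ell-1}$, the cell $c'$ must be one of the cells composing $\Omega^{\ell-1}$, giving $c' \subseteq \Omega^{\ell-1}$. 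This dyadic/barycentre argument is where the real work lies; the remaining steps are bookkeeping.

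It then remains to phrase the parent relation $c \subseteq c'$ arithmetically. I would verify that the parent barycentre is obtained from the child barycentre by one of the $2^d$ shifts $\overline{s} = (\pm 2^{-\ell}, \dots, \pm 2^{-\ell})$, so that ``$c \subseteq c'$ for some coarse cell $c' \subseteq \Omega^{\ell-1}$'' holds exactly when $add_d(u, s_i^\ell) \in L_{\ell-1}$ for at least one $i \in \{1,\dots,k\}$, where $u = (\overline{z}_c)_b$. This is precisely the disjunction $\Phi_\ell$, while membership $u \in L_\ell$ encodes that $c$ is a constituent cell of $\Omega^\ell$. Consequently $\Upsilon_\ell = \forall u\,((u \in L_\ell) \to \Phi_\ell)$ is true if and only if every constituent cell of $\Omega^\ell$ lies in $\Omega^{\ell-1}$, i.e. if and only if $\Omega^\ell \subseteq \Omega^{\ell-1}$. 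Conjoining over $\ell = 2, \dots, N-1$ yields the theorem. The one computation I would carry out explicitly is the offset check of the preceding sentence, confirming that each coordinate shift has magnitude $2^{-\ell}$ and not some neighbouring power of two; this is the point most prone to an off-by-one error, and it is what ties the abstract shift vectors $s_i^\ell$ to the concrete refinement between levels $\ell-2$ and $\ell-1$.
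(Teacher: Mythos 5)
Your proposal is correct and follows essentially the same route as the paper: reduce nestedness to the per-level inclusions $\Omega^\ell \subseteq \Omega^{\ell-1}$, reduce each inclusion to a cellwise condition via Assumption A, and encode the parent-cell relation through the $2^d$ shift vectors $\overline{s}^{\ell}_i = (\pm 2^{-\ell},\dots,\pm 2^{-\ell})$ and the formula $\Phi_\ell$. The only difference is that you supply an explicit dyadic/barycentre justification for the step "$c \subseteq \Omega^{\ell-1}$ iff its (unique) parent cell $c'$ satisfies $c' \subseteq \Omega^{\ell-1}$," which the paper asserts without detail; this is a welcome addition rather than a divergence.
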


    \noindent {\it Verification of nestedness and its complexity}. Now let us be given deterministic finite 
    automata $M_1, \dots, M_{N-1}$ defining the domains 
    $\Omega^1, \dots, \Omega^{N-1}$, respectively.
    We denote by $m_1,\dots,m_{N-1}$  the 
    number of states in the automata $M_1$,$\dots$,$M_{N-1}$, respectively. Theorem \ref{nestedness_thm}  
    provides 
	a simple polynomial--time verification procedure 
	for the nestedness of 
	$\Omega^1,\dots,\Omega^{N-1}$.  
	
	Indeed, first for each pair $\ell$ and $i$, 
	$1 \leqslant \ell \leqslant N-1$, 
    $1 \leqslant i \leqslant k$ from the automaton $M_{\ell-1}$
    one can construct a 
    deterministic finite automaton $M_{\ell-1,i}$ 
    recognizing the unary relation  
    $add_d (u,s_i^\ell) \in L_{\ell-1}$ with 
    the number of states at most $O(m_{\ell-1})$.
    Then, by Theorem \ref{fa_presentable_structures_theorem} 
    (using the product construction) 
    one can construct  
    a deterministic finite automaton 
    recognizing the unary relation  $\Phi_\ell$
    with the number 
    of states at most  $O(m_{\ell-1}^k)$. Then, by Theorem \ref{fa_presentable_structures_theorem} 
    one can construct a deterministic
    finite automaton 
    recognizing the unary relation 
    $(u \in L_\ell) \to \Phi_\ell$
    with the number of states 
    at most $O(m_\ell \cdot m_{\ell-1}^k)$.
    Finally, since 
    $\forall = \neg \circ \exists \circ \neg$ 
    and the emptiness problem for a deterministic
    finite automaton with $n$ states can 
    be solved in $O(n^2)$ time, 
    there exists a $O(m_\ell^2 \cdot m_{\ell-1}^{2k})$--time algorithm deciding  
    whether or not $\Upsilon_\ell$ is
    true.
    Thus, there exists a polynomial--time algorithm 
    which for given  deterministic finite automata
    $M_1, \dots, M_{N-1}$  
    decides whether or not 
    $\Omega^1 \supseteq \dots \supseteq \Omega^{N-1}$.   
    	
	\subsection{Verification of  Assumption B} 
	\label{domain_condition_verification_subsection}
	
	Recall that Assumption B %\ref{shape_of_domains_assumption} 
	ensures that 
	for each spline function $f \in \mathcal{S}_m (\mathcal{T})$, 
	$f$ can be uniquely represented as the sum 
	$f = \sum\limits_{\delta \in \mathcal{K}} \lambda_\delta \delta$,
	where $\mathcal{K}$ is the collection of tensor product 
	B--splines generated by Kraft's
	selection mechanism \eqref{Kraft_selection_mechanism1} 
	and $\lambda_\delta \in \mathbb{R}$. 
	In this subsection we construct a first order sentence 
	which is true for a certain FA--presented structure 
	if and only if Assumption B  
	holds.
%	Similarly to the %subsection %\ref{verification_of_nestednes%s_subsection},
%	a concrete procedure %verifying  
%	Assumption B  in %polynomial time 
%	for deterministic finite 
%	automata %$M_1,\dots,M_{N-1}$ given as %the input can be 
%	directly obtained from %this first order sentence and 
%	Theorem %\ref{fa_presentable_structures%_theorem}.  

	In order to verify Assumption B  
	one has to 
	verify that for each $\ell = 0, \dots, N-2$ the domain 
	$\mathcal{M}^{\ell} = \mathbb{R}^d \setminus \mathring{\Omega}^{\ell+1}$ satisfies the following: for each 
	$\beta \in B^{\ell}_{d,m}$,
	if the intersection 
	$\mathrm{supp}\,\beta \cap \mathcal M^\ell$
	is not empty, it must be connected.	  
	Each $\beta \in B^{\ell}_{d,m}$  
	we associate with one of the 
	$(m+1)^d$ cells from $\mathcal{C}_d^{\ell}$
	composing $\mathrm{supp}\,\beta$,  
	depending on the parity of $m+1$: 
	if $m+1$ is odd then we associate $\beta$ with 
	the central cell of $\mathrm{supp}\,\beta$, 
	if $m+1$ is even then we associate $\beta$ 
	with the cell which has 
	the central vertex of 
	$\mathrm{supp}\,\beta$
	as its lower left corner\footnote{ 
	We use the term lower left corner of 
	a $d$--dimensional cell $[0,1]^d$ for the vertex 
	$(0,\dots,0) \in \mathbb{R}^d$.} 
	(see Fig.~\ref{cell_association_rule} for illustration). 
	For a given $\beta \in B_{d,m}^{\ell}$ we denote 
	by $c_\beta \in \mathcal{C}^{\ell}_d$ the associated cell. 
	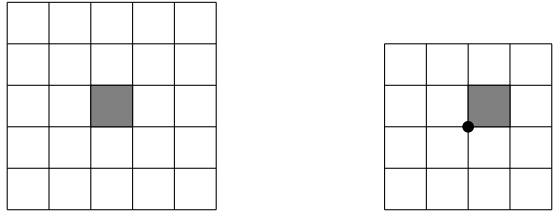
\begin{figure}[h]
		\centering 
		\begin{tikzpicture}[scale=0.4]
			\foreach \x in {0,...,5}
			\draw (\x, 0) -- (\x, 5);
			\foreach \y in {0,...,5} 
			\draw (0,\y) -- (5, \y);   
			\draw [fill=gray] (2,2) rectangle (3,3); % change 
			\path[step=1.0,black,thin,xshift=0.0cm,yshift=0.0cm] (0,0) grid (7,6);
		\end{tikzpicture}
		\hskip10mm
		\begin{tikzpicture}[scale=0.4]
			\foreach \x in {0,...,4}
			\draw (\x, 0) -- (\x, 4);
			\foreach \y in {0,...,4} 
			\draw (0,\y) -- (4, \y);   
			\draw [fill=gray] (2,2) rectangle (3,3); % change 
			\path[step=1.0,black,thin,xshift=0.0cm,yshift=0.0cm] (0,0) grid (7,6);
			\fill[black] (2,2) circle (4pt);
		\end{tikzpicture}
		\caption{The figure on the left shows 
			the support of $\beta \in B^{\ell}_{2,4}$
			with the associated cell $c_\beta$ shaded in gray. 
			The figure on the right shows the
			support of $\beta \in B^{\ell}_{2,3}$ 
			with the associated cell $c_\beta$ shaded in gray; 
			this cell  has 
			the central vertex of 
			$\mathrm{supp}\,\beta$ 
			(shown as a black dot) 
			as its lower left corner.} 
		\label{cell_association_rule} 
	\end{figure}
		
	For a given $\overline{i}=(i_1,\dots,i_d) \in \mathbb{Z}^d$ 
	and an integer $\ell \geqslant 0$
	let $\overline{t}_{\overline{i}}^\ell$ be the vector 
	$\overline{t}_{\overline{i}} ^\ell =
	\left(\frac{i_1}{2^{\ell}},\dots, \frac{i_d}{2^{\ell}}\right) \in \mathbb{Z}[1/b]^d$.
	Let
	$t_{\overline{i}}^\ell
	= (\overline{t}_{\overline{i}}^\ell)_b 
	\in \mathcal{L}_b ^d$.  
	For a given $m \geqslant 0$ we denote by
	$I_m$ the set 
	$I_m = \{(i_1,\dots,i_d) \in \mathbb{Z}^d \,|\, 
	-\frac{m}{2} \leqslant i_k \leqslant \frac{m}{2}, 
	k=1,\dots,d\}$ 
	if $m+1$ is odd and 
	$I_m = \{(i_1,\dots,i_d) \in \mathbb{Z}^d \,|\, 
	-\frac{m+1}{2} \leqslant i_k \leqslant \frac{m-1}{2}, 
	k=1,\dots,d\}$ if $m+1$ is even.    
	%Let $\overline{i} = (i_1, \dots, i_d) \in I_m$.  
	We denote by $\Phi_{\ell,m,\overline{i}}$ the following first order formula: 
	\begin{equation*}
		\label{phi_l_m_i_formula}   
		\Phi_{\ell, m,\overline{i}}  =       
		\left( add_d (u, 
		t_{\overline{i}}^\ell ) \in L_{\ell+1} \right).
	\end{equation*}  
	The condition that for a given 
	$\beta \in B^{\ell}_{d,m}$ 
	the intersection $\mathrm{supp}\,\beta \cap 
	\mathcal{M}^{\ell} \not= \emptyset$ holds 
	if and only if the evaluation of the following
	first order formula
	is true for $u = (\overline{z}_{c_\beta})_b$ 
	and the constants 
	$t_{\overline{i}} ^\ell$:
	\begin{equation}
		\label{psi_l_0_definition}   
		\Psi_{\ell}  = \bigvee\limits_{\overline{i} \in I_m} \neg \Phi_{\ell, m,\overline{i}}.
	\end{equation}  
	%and the domain $\mathcal{L}_b ^d$ 
	Moreover, the condition that the intersection 
	$\mathrm{supp}\,\beta 
	\cap \mathcal{M}^{\ell}$ is connected can be 
	encoded as a first order formula as follows. 
	Every possible nonempty intersection 
	$\mathrm{supp}\,\beta\, \cap \,
	\mathcal{M}^{\ell}$ corresponds to a 
	nonempty subset $J \subseteq I_m$ (see Fig.~\ref{connected_disconnected_intersections} 
	for illustration) for which 
	the evaluation of the following 
	first order formula is true for $u = (\overline{z}_{c_\beta})_b$, 
	the constants 
	$t_{\overline{i}}^\ell$ and 
	the domain $\mathcal{L}_b ^d$: 
	\begin{equation*}
		\Psi_{\ell, J} =   
		\left(\bigwedge\limits_{\overline{j} \in J} 
		\neg \Phi_{\ell, m,\overline{j}} \right)
		\land
		\left(\bigwedge\limits_{\overline{j} \in 
			I_m \setminus J} 
		\Phi_{\ell, m, \overline{j}} \right).
	\end{equation*} 	        	
	We denote by $\mathcal{J}_m$ the collection of all 
	nonempty subsets $J \subseteq I_m$ that correspond to
	the  
	connected intersections. 
	For example,  in Fig.~\ref{connected_disconnected_intersections}  
	the intersection on the left 
	corresponding to the set 
	\(J=\{ (-2,-1),(-1,-1),(0,-1),(1,-1),\) \((2,-1),(-1,0),
	(1,0),(0,1)\}\) is connected, so $J \in \mathcal{J}_4$; 
	the intersection on the right 
	corresponding to the set  $J'=\{(-2,-1),(-1,-1),(0,-1),
	(1,-1),(2,-1),(-1,0),$ $(0,0),(1,0),(0,1),
	(0,-2),(2,2)\}$ is not connected, so 
	$J' \notin \mathcal{J}_4$.  
	\begin{figure}[h]
		\centering 
		\begin{tikzpicture}[scale=0.4]
			\foreach \x in {0,...,5}
			\draw (\x, 0) -- (\x, 5);
			\foreach \y in {0,...,5} 
			\draw (0,\y) -- (5, \y);   
			\draw [fill=gray] (2,3) rectangle (3,4);  %change 
			\draw [fill=gray] (3,2) rectangle (4,3);  %change
			\draw [fill=gray] (4,1) rectangle (5,2);  %change
			\draw [fill=gray] (3,1) rectangle (4,2);  %change
			\draw [fill=gray] (2,1) rectangle (3,2);  %change
			\draw [fill=gray] (1,1) rectangle (2,2);  %change
			\draw [fill=gray] (0,1) rectangle (1,2);  %change
			\draw [fill=gray] (1,2) rectangle (2,3);  %change
			\path[step=1.0,black,thin,xshift=0.0cm,yshift=0.0cm] (0,0) grid (7,6);
		\end{tikzpicture}
		\hskip10mm
		\begin{tikzpicture}[scale=0.4]
			\foreach \x in {0,...,5}
			\draw (\x, 0) -- (\x, 5);
			\foreach \y in {0,...,5} 
			\draw (0,\y) -- (5, \y);   
			\draw [fill=gray] (2,3) rectangle (3,4); %change 
			\draw [fill=gray] (3,2) rectangle (4,3); %change 
			\draw [fill=gray] (4,1) rectangle (5,2); %change
			\draw [fill=gray] (3,1) rectangle (4,2); %change
			\draw [fill=gray] (2,1) rectangle (3,2); %change
			\draw [fill=gray] (1,1) rectangle (2,2); %change
			\draw [fill=gray] (0,1) rectangle (1,2); %change
			\draw [fill=gray] (2,2) rectangle (3,3); %change 
			\draw [fill=gray] (1,2) rectangle (2,3); %change
			\draw [fill=gray] (4,4) rectangle (5,5); %change
			\draw [fill=gray] (2,0) rectangle (3,1); %change
			\path[step=1.0,black,thin,xshift=0.0cm,yshift=0.0cm] (0,0) grid (7,6);
		\end{tikzpicture}
		\caption{The figure on the left shows 
			the support of some tensor product 
			B--spline from $B^{\ell}_{2,4}$ and
			its intersection with $\mathcal{M}^{\ell}$ 
			shaded in gray which is connected.  	
			The figure on the right shows 
			the support of some tensor product 
			B--spline from $B^{\ell}_{2,4}$ and
			its intersection with $\mathcal{M}^{\ell}$ 
			shaded in gray which is not connected.} 
		\label{connected_disconnected_intersections} 
	\end{figure}
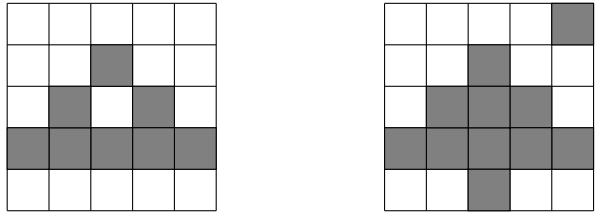	
	
	For given $d>0$ and $\ell \geqslant 0$, we denote by    
	$\widetilde{L}^d_\ell \subseteq
	\mathcal{L}^{d}_b$ the language  
	$\widetilde{L}^d_\ell = 
	\{ (\overline z_c)_b \,|\, 
	c \in \mathcal{C}_d ^{\ell}\}$.  
	For example, if $b=2$, 
	the language $\widetilde{L}^d_\ell$   
	%$\widetilde{L}^d_\ell = 
	%\{ \left(\overline{z}_c\right)_b |\, 
	%c \in \mathcal{C}_d ^{\ell-1}\}$ that 
	consists of all convolutions of $d$ 
	strings of the form 
	$u_i \otimes v_i \in \mathcal{L}_2 ^2$, $i=1,\dots,d$ 
	for which $v_i = r_i 1 s_i$, where 
	$r_i \in \{0,1\}^*$, $|r_i| = \ell$ and 
	$s_i \in \{0\}^*$. 
	The language  
	$\widetilde{L}^d_\ell$ %$\subseteq \mathcal{L}^{d}_b$ 
	is regular.
	%, see also	Proposition \ref{L_reg_prop1}.
	Finally, the condition that for every 
	$\beta \in B^{\ell}_{d,m}$ such that 
	$\mathrm{supp}\, \beta \, \cap  
	\mathcal{M}^{\ell} \neq \emptyset$ the intersection 
	$\mathrm{supp}\, \beta \,\cap
	\mathcal{M}^{\ell}$ is connected holds if and only 
	if the following first order formula is true for the structure 
	$(\mathcal{L}^d_b; add_d,  
	\widetilde{L}^d_{\ell}, L_{\ell+1}, 
	\{t_{\overline i} ^\ell \, |\, 
	\overline{i} \in I_m \} )$: 
	\begin{equation*} 
		\mathcal{X}_\ell =  
		\forall u  \left(
		\left((u \in \widetilde{L}^d_\ell ) 
		\land 
		\Psi_{\ell} 
		\right)
		\to \bigvee\limits_{J \in \mathcal{J}_m} 
		\Psi_{\ell, J} \right).
	\end{equation*}
	%Thus, we proved the following theorem. 
	\begin{theorem} 
		\label{shape_of_domains_assumption_thm}
		Assumption B  
		holds for the domains 
		$\mathcal{M}^0,\dots,\mathcal{M}^{N-2}$ if and only if the first order sentence
		$\mathcal{X}_0 \wedge \dots \wedge \mathcal{X}_{N-2}$
		is true for the structure:
		$$(\mathcal{L}^d_b; add_d,   
		\widetilde{L}^d_{0}, \dots,\widetilde{L}^d_{N-2},
		L_1, \dots, L_{N-1},
		\{ t^\ell_{\overline{i}} \,|\, 
		\overline{i} \in I_m, 
		\ell = 0,\dots,N-2\}).$$
	\end{theorem}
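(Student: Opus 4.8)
The plan is to reduce the global equivalence to a level-by-level one: I will show that for each $\ell = 0,\dots,N-2$ the sentence $\mathcal{X}_\ell$ is true in the indicated structure if and only if the level-$\ell$ clause of Assumption B holds, that is, every $\beta \in B^\ell_{d,m}$ with $\mathrm{supp}\,\beta \cap \mathcal{M}^\ell \neq \varnothing$ has $\overline{\mathrm{supp}\,\beta} \cap \mathcal{M}^\ell$ connected. Conjoining these equivalences over $\ell$ yields the theorem, since the structure in the statement merely gathers all the predicates $\widetilde{L}^d_0,\dots,\widetilde{L}^d_{N-2}$, $L_1,\dots,L_{N-1}$ and the constants $t^\ell_{\overline i}$ needed to interpret $\mathcal{X}_0 \wedge \dots \wedge \mathcal{X}_{N-2}$ simultaneously. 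The backbone is the bijection $\beta \mapsto c_\beta \mapsto (\overline z_{c_\beta})_b$ between $B^\ell_{d,m}$ and $\widetilde{L}^d_\ell$; first I would confirm that the cell-association rule is well defined and that $\beta \mapsto c_\beta$ is a bijection onto $\mathcal{C}^\ell_d$, so that the variable $u$ quantified in $\mathcal{X}_\ell$ ranges exactly over the codes $(\overline z_{c_\beta})_b$ as $\beta$ runs over $B^\ell_{d,m}$.

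Next I would verify the atomic formulas. For $\overline i \in I_m$ the barycentre of the $\overline i$-th cell of $\overline{\mathrm{supp}\,\beta}$ equals $\overline z_{c_\beta} + \overline t^\ell_{\overline i}$, so by the definition of $add_d$ the string $add_d(u, t^\ell_{\overline i})$ is the code of that cell; hence $\Phi_{\ell,m,\overline i}$ holds at $u = (\overline z_{c_\beta})_b$ if and only if that cell lies in $L_{\ell+1}$, i.e. is contained in $\Omega^{\ell+1}$. I would then translate containment into a complement condition using Assumption A at level $\ell+1$: since $\Omega^{\ell+1}$ is a union of cells of $\mathcal{C}^\ell_d$, a cell $c \in \mathcal{C}^\ell_d$ with $c \not\subseteq \Omega^{\ell+1}$ has $\mathring c$ disjoint from $\Omega^{\ell+1}$, whence $c \subseteq \mathbb{R}^d \setminus \mathring{\Omega}^{\ell+1} = \mathcal{M}^\ell$; conversely $c \subseteq \Omega^{\ell+1}$ forces $\mathring c \cap \mathcal{M}^\ell = \varnothing$. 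Thus $\neg\Phi_{\ell,m,\overline i}$ holds at $u$ precisely when the whole $\overline i$-th cell sits inside $\mathcal{M}^\ell$.

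With the atomic formulas pinned down the remainder is bookkeeping. The disjunction $\Psi_\ell = \bigvee_{\overline i \in I_m}\neg\Phi_{\ell,m,\overline i}$ is true at $u$ exactly when some cell of $\overline{\mathrm{supp}\,\beta}$ lies in $\mathcal{M}^\ell$, and I would identify this with $\mathrm{supp}\,\beta \cap \mathcal{M}^\ell \neq \varnothing$ (if every cell were in $\Omega^{\ell+1}$ then $\mathrm{supp}\,\beta \subseteq \mathring{\Omega}^{\ell+1}$ and the intersection is empty, and conversely). For each such $\beta$ there is a unique nonempty index set $J(\beta) \subseteq I_m$ collecting the cells contained in $\mathcal{M}^\ell$, and $\Psi_{\ell,J}$ is true at $u$ iff $J = J(\beta)$, since $\Psi_{\ell,J}$ imposes $\neg\Phi$ on $J$ and $\Phi$ on $I_m \setminus J$. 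From the previous paragraph $\overline{\mathrm{supp}\,\beta} \cap \mathcal{M}^\ell = \bigcup_{\overline j \in J(\beta)} c_{\overline j}$, a union of closed cells, so by the definition of $\mathcal{J}_m$ the disjunction $\bigvee_{J \in \mathcal{J}_m}\Psi_{\ell,J}$ holds at $u$ exactly when this intersection is connected. Reading $\mathcal{X}_\ell = \forall u\,((u \in \widetilde{L}^d_\ell \wedge \Psi_\ell) \to \bigvee_{J \in \mathcal{J}_m}\Psi_{\ell,J})$ through the bijection then gives precisely the level-$\ell$ clause of Assumption B, which completes the reduction.

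I expect the main obstacle to be the geometric bookkeeping rather than anything logical. The delicate points are that the passage from $c \subseteq \Omega^{\ell+1}$ to $\mathring c \cap \mathcal{M}^\ell = \varnothing$ is exact on cell boundaries, that a cell not contained in $\Omega^{\ell+1}$ lies \emph{entirely} in $\mathcal{M}^\ell$ so that $\overline{\mathrm{supp}\,\beta} \cap \mathcal{M}^\ell$ is genuinely a union of whole cells, and that the shift $\overline t^\ell_{\overline i}$ together with $|I_m| = (m+1)^d$ indexes the $(m+1)^d$ cells of the support once each irrespective of the parity of $m+1$. The one point that is not purely mechanical is that topological connectedness of $\bigcup_{\overline j \in J} c_{\overline j}$ matches membership of $J$ in $\mathcal{J}_m$; but as $\mathcal{J}_m$ is defined to be the family of index sets giving connected intersections, this amounts to confirming the translation rather than proving anything new.
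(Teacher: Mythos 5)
Your proof is correct and follows essentially the same route as the paper: the paper's own argument is precisely this cell--by--cell translation, identifying $\beta$ with the code of its associated cell $c_\beta$, reading $\neg\Phi_{\ell,m,\overline i}$ as ``the $\overline i$-th cell of $\overline{\mathrm{supp}\,\beta}$ lies in $\mathcal{M}^\ell$'' (which, as you note, rests on Assumption A), and matching $\Psi_\ell$ and $\bigvee_{J\in\mathcal{J}_m}\Psi_{\ell,J}$ with nonemptiness and connectedness of the intersection. The one caveat --- shared with the paper, whose figures make the same identification --- is that $\overline{\mathrm{supp}\,\beta}\cap\mathcal{M}^\ell$ can strictly contain $\bigcup_{\overline j\in J(\beta)}c_{\overline j}$ by picking up lower--dimensional faces of the support shared with cells of $\mathcal{M}^\ell$ lying outside the support, so your asserted equality and the notion of ``connected'' must both be read cell--wise, exactly as the paper implicitly intends.
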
	
	Similarly to the argument in the end of subsection \ref{verification_of_nestedness_subsection},  
	% by Theorem \ref{fa_presentable_structures_theorem} 
	from the first order sentence 
	$\mathcal{X}_0 \wedge \dots \wedge \mathcal{X}_{N-2}$ 
	one can obtain a polynomial--time algorithm which
	for deterministic finite automata
	$M_1,\dots,M_{N-1}$ given as the input decides whether or not 
	$\mathcal{X}_0 \wedge \dots \wedge \mathcal{X}_{N-2}$ is true.

	\subsection{Regularity for $\mathcal{K}$}  	  
	\label{regularity_for_K_subsection}
	
	Recall that
	at each level $\ell = 0, \dots, N-1$ the 
	collection of tensor product B--splines 
	$\mathcal{K}^\ell$ is generated by 
	Kraft's selection mechanism 
	\eqref{Kraft_selection_mechanism1}. 
	We associate each tensor product B--spline  
	$\beta \in \mathcal{K}^\ell$ with the 
	cell $c_\beta \in \mathcal{C}^\ell_d$ 
	according to the rule described in the subsection \ref{domain_condition_verification_subsection}.   
	This gives a collection of cells 
	$\mathcal{K}^\ell _c = \{c_\beta \, | \, \beta \in \mathcal{K}^\ell\} \subseteq \mathcal{C}^\ell_d$ for 
	every $\ell = 0, \dots, N-1$.   
	We denote by $\widehat{L}_\ell $
	the language   
	$\widehat{L}_\ell = 
	\{(\overline{z}_c)_b \, |\, c \in  \mathcal{K}^\ell _c\} 
	\subseteq \widetilde{L}^d_{\ell}$ corresponding to 
	$\mathcal{K}^\ell _c$. 
	In this subsection we  
	construct the first order formulae  
	defining the  languages 
	$\widehat{L}_0,\dots, \widehat{L}_{N-1}$. 

%	Similarly to the %subsection %\ref{verification_of_nestednes%s_subsection},
%	a concrete %polynomial--time algorithm %for constructing 
%	deterministic finite %automata accepting 
%	the languages %$\widehat{L}_0,\dots, %\widehat{L}_{N-1}$
%	from the deterministic %finite automata 
%	$M_1,\dots,M_{N-1}$ given %as the input can be 
%	directly obtained from %these first order formulae %and 
%	Theorem %\ref{fa_presentable_structures%_theorem}.
	
	First we note that 
	the language $\widehat{L}_0$ is defined by the 
	formula:
	\begin{equation*}
		\label{Theta_K_0_formula} 
		\Theta_0 =  \left( u \in \widetilde{L}_0 ^d \right)
		\wedge \Psi_{0},
	\end{equation*} 
	where $\Psi_{0}$ is given by \eqref{psi_l_0_definition}.  
	That is, $\widehat{L}_0$ is the 
	language of strings $u$ from
	$\mathcal{L}_b ^d$ 
	for which  
	the evaluation of the formula 
	$\Theta_0$ is true.  
	Indeed, the formula 
	$\Psi_{0}$ verifies whether  or not 
	the intersection of $\mathrm{supp}\,\beta$ 
	for $\beta \in B^0 _{d,m}$ with 
	$\mathcal{M}^0$ 
	is nonempty. If it is nonempty, then 
	$\beta \in \mathcal{K}^0$.         	
	For given $\ell>0$, $\overline{i} = (i_1,\dots,i_d) \in I_m$ and
	$j=1,\dots,k$, where $k=2^d$, we denote 
	by
	$\overline{r}^\ell_{\overline{i} j}$ the constant vectors 
	$\overline{r}^\ell _{\overline{i} j} = 
	\overline{t}^\ell_{\overline{i}} + 
	\overline{s}_j^{\ell+1}  
	%=\left(\frac{i_1}{2^{\ell}},\dots, 
	%\frac{i_d}{2^{\ell}}\right) + \overline{s}_j^{\ell+1}
	$.    
	Let $r^\ell_ {\overline{i} j} = 
	(\overline{r}^\ell_ {\overline{i} j})_b \in 
	\mathcal{L}_b ^d$. 
	For a given $\ell > 0$, let: 
	\begin{equation*}
		\label{Theta_K_l_formula} 
		\Theta_\ell = \left( u \in \widetilde{L}_{\ell} ^d \right)
		\wedge
		\Psi_{\ell}  
		\wedge  
		\bigwedge\limits_{\overline{i} \in  I_m} 
		\bigvee\limits_{j=1}^{k} 
		\left( add_d (u,r^\ell_{\overline{i} j}) 
		\in L_\ell \right). 
	\end{equation*} 
	The formula $\Theta_\ell$ defines 
	the language $\widehat{L}_\ell$ for 
	$\ell = 1,\dots,N-2$.  
	Indeed, for a given $\beta \in B^\ell _{d,m}$
	the formula $\Psi_{\ell}$ verifies whether or not   
	the intersection of $\mathrm{supp}\,\beta$ 
	with $\mathcal{M}^\ell$ 
	is nonempty.   
	The formula 
	$\bigwedge\limits_{\overline{i} 
		\in  I_m} 
	\bigvee\limits_{j=1}^{k} 
	\left( add_d (u,r^\ell_{\overline{i} j}) 
	\in L_\ell \right)$ 
	verifies whether or not
	$\mathrm{supp}\,\beta \subseteq \Omega^\ell$.
	If for $\beta \in B^\ell _{d,m}$,
	$\mathrm{supp}\, \beta 
	\cap \mathcal{M}^\ell \neq \emptyset$ 
	and $\mathrm{supp}\,\beta \subseteq \Omega^\ell$, then $\beta \in \mathcal{K}^\ell$.  
	For a given $\ell >0$, let:
	\begin{equation} 
		\label{Theta_K_N-1_formula}   
		\Gamma_{\ell} = 
		\left( u \in \widetilde{L}^d_{\ell}  \right) \land 
		\bigwedge\limits_{\overline{i} \in  I_m} 
		\bigvee\limits_{j=1}^{k} 
		\left( add_d (u,r^{\ell}_{\overline{i} j}) 
		\in L_{\ell} \right).
	\end{equation} 
	The formula $\Gamma_{N-1}$ defines 
	the language $\widehat{L}_{N-1}$. 
	Indeed, for $\beta \in B^{N-1}_{d,m}$
	the formula 
	$\bigwedge\limits_{\overline{i} \in  I_m} 
	\bigvee\limits_{j=1}^{k} 
	\left( add_d (u,r^{N-1}_{\overline{i} j}) 
	\in L_{N-1} \right)$
	verifies whether or not 
	$\mathrm{supp}\,\beta \subseteq \Omega^{N-1}$. 
	For $\beta \in B^{N-1}_{d,m}$ if
	$\mathrm{supp}\,\beta \subseteq \Omega^{N-1}$, then 
	$\beta \in \mathcal{K}^{N-1}$.  
	Since the languages $\widehat{L}_0, \dots, 
	\widehat{L}_{N-1}$ are defined by the first order formulae
	%$\Theta_0,\Theta_1,\dots,$ $\Theta_{N-2},\Gamma_{N-1}$, 
	they must be regular for regular hierarchical
	meshes.
	Thus, we proved the following theorem. 
	\begin{theorem} 
	\label{kraft_basis_thm}	
		The languages $\widehat{L}_0, \dots, 
		\widehat{L}_{N-1}$ 
		corresponding to the collection of  
		tensor product B--splines generated 
		by Kraft's selection mechanism
		are defined by the first order formulae
		$\Theta_0,\Theta_1,\dots,$ $\Theta_{N-2},\Gamma_{N-1}$, 
		respectively, and  they must be regular for 
		a regular hierarchical mesh.  
	\end{theorem}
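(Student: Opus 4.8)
The plan is to verify, one level at a time, that the displayed first order formulae cut out exactly the languages $\widehat{L}_\ell$, and then to read off regularity directly from the closure properties in Theorem \ref{fa_presentable_structures_theorem}. Most of the geometric groundwork has already been laid in the discussion preceding the statement; what remains is to assemble it into a clean correspondence and to argue that the encoding is position--independent. First I would fix the encoding map: by the cell--association rule of Subsection \ref{domain_condition_verification_subsection}, the assignment $\beta \mapsto c_\beta$ restricts to a bijection from $\mathcal{K}^\ell$ onto $\mathcal{K}^\ell_c$, so that composing with $c \mapsto (\overline{z}_c)_b$ identifies $\widehat{L}_\ell$ with $\mathcal{K}^\ell$. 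It therefore suffices to prove that a string $u = (\overline{z}_{c_\beta})_b$ satisfies the relevant formula precisely when $\beta \in \mathcal{K}^\ell$.

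Next I would treat the three cases separately, recalling that Kraft's mechanism \eqref{Kraft_selection_mechanism1} places $\beta \in B^\ell_{d,m}$ in $\mathcal{K}^\ell$ exactly when $\mathrm{supp}\,\beta \cap \mathcal{M}^\ell \neq \varnothing$ and $\mathrm{supp}\,\beta \subseteq \Omega^\ell$. For $\ell = 0$ the containment is vacuous since $\Omega^0 = \mathbb{R}^d$, so only the intersection condition and membership in $\widetilde{L}^d_0$ survive, giving $\Theta_0$; here $\Psi_0$ in \eqref{psi_l_0_definition} tests, cell by cell over $I_m$, whether some cell of $\overline{\mathrm{supp}\,\beta}$ escapes $\mathring{\Omega}^1$. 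For $\ell = N-1$ the intersection with $\mathcal{M}^{N-1} = \mathbb{R}^d$ is automatic, so only the containment survives, giving $\Gamma_{N-1}$. The intermediate levels $1 \leqslant \ell \leqslant N-2$ require both conditions, yielding $\Theta_\ell$. The containment $\mathrm{supp}\,\beta \subseteq \Omega^\ell$ is the delicate ingredient: I would show it is equivalent to every one of the $(m+1)^d$ level--$\ell$ cells composing $\overline{\mathrm{supp}\,\beta}$ lying in $\Omega^\ell$, and that, because $\Omega^\ell$ is a union of whole level--$(\ell-1)$ cells, such a level--$\ell$ cell lies in $\Omega^\ell$ precisely when its unique coarse parent in $\mathcal{C}^{\ell-1}_d$ does. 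Translating to barycentres, the cell indexed by $\overline i \in I_m$ has barycentre $\overline{z}_{c_\beta} + \overline{t}^\ell_{\overline i}$, its $k = 2^d$ candidate parents have barycentres $\overline{z}_{c_\beta} + \overline{r}^\ell_{\overline i j}$ with $\overline{r}^\ell_{\overline i j} = \overline{t}^\ell_{\overline i} + \overline{s}^{\ell+1}_j$, and membership of the parent in $\Omega^\ell$ becomes $add_d(u, r^\ell_{\overline i j}) \in L_\ell$. Hence $\bigwedge_{\overline i \in I_m}\bigvee_{j=1}^{k} (add_d(u, r^\ell_{\overline i j}) \in L_\ell)$ expresses exactly $\mathrm{supp}\,\beta \subseteq \Omega^\ell$.

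I expect the main obstacle to be the bookkeeping that makes this translation both correct and uniform in the position of $\beta$. One must check that $I_m$, whose definition depends on the parity of $m+1$, enumerates the support cells relative to $c_\beta$ in the same way for every $\beta$, so that a single finite family of constant offsets $\overline{t}^\ell_{\overline i}$ and $\overline{r}^\ell_{\overline i j}$ serves all strings $u$ simultaneously. A subtle point here is that, for each support cell, only one of the $k$ sign combinations in $\overline{s}^{\ell+1}_j$ actually lands on the barycentre of a level--$(\ell-1)$ cell; the remaining $k-1$ combinations produce points that are not level--$(\ell-1)$ barycentres and therefore automatically fall outside $L_\ell$, so the disjunction $\bigvee_j$ harmlessly probes all parents and detects only the genuine one. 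Verifying this parity argument in each coordinate is the one place where care is required.

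Granting the correspondence, regularity is immediate. Each of $\Theta_0,\dots,\Theta_{N-2},\Gamma_{N-1}$ is a first order combination, using only $\wedge$, $\vee$ and $\neg$, of the relation $add_d$, the languages $\widetilde{L}^d_\ell$, and the languages $L_1,\dots,L_{N-1}$; the relation $add_d$ and the auxiliary languages $\widetilde{L}^d_\ell$ are FA--recognizable unconditionally, while $L_1,\dots,L_{N-1}$ are FA--recognizable precisely because the mesh is assumed regular. By parts (1) and (3) of Theorem \ref{fa_presentable_structures_theorem} the defined unary relations $\widehat{L}_0,\dots,\widehat{L}_{N-1}$ are FA--recognizable, and as subsets of $\mathcal{L}^d_b$ this means they are regular languages over the convolution alphabet, as claimed.
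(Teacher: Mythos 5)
Your proposal is correct and follows essentially the same route as the paper: check that $\Theta_0$, $\Theta_\ell$ ($1\leqslant \ell\leqslant N-2$) and $\Gamma_{N-1}$ express exactly the two Kraft conditions ($\mathrm{supp}\,\beta\cap\mathcal{M}^\ell\neq\varnothing$ via $\Psi_\ell$, and $\mathrm{supp}\,\beta\subseteq\Omega^\ell$ via the conjunction--disjunction over parent--cell barycentres), then invoke Theorem \ref{fa_presentable_structures_theorem} and the regularity of $L_1,\dots,L_{N-1}$ and $\widetilde{L}^d_\ell$ to conclude regularity. The extra details you supply (the bijection $\beta\mapsto c_\beta$ and the parity observation that only one of the $2^d$ offsets $\overline{s}^{\ell+1}_j$ lands on a genuine level-$(\ell-1)$ barycentre) are correct elaborations of points the paper leaves implicit.
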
		
	Similarly to the argument in the end of the subsection
	\ref{verification_of_nestedness_subsection}, 
	for given deterministic finite automata $M_1,\dots,M_{N-1}$
	one can construct deterministic 
	finite automata 
	$\widehat{M}_0,\dots,$ $\widehat{M}_{N-1}$
	recognizing the languages 
	$\widehat{L}_0, \dots, \widehat{L}_{N-1}$, 
	respectively. Furthermore, 
	these automata $\widehat{M}_0,\dots,\widehat{M}_{N-1}$ 
	can be constructed from 
	the automata
	$M_1,\dots,M_{N-1}$ in polynomial time. 
	
	\section{Regular Splines}
	\label{regular_hierarchical_splines_section} 
	
	In this section first we introduce the encoding of  
	splines over hierarchical meshes as finite automata, 
	see the subsection \ref{encoding_of_splines_subsec}. 
	Then in the subsection \ref{examples_regular_splines} 
	we show examples of splines which admit such encoding. In the subsection \ref{computing_values_subsection} 
	we describe an algorithm for computing the value of
	a spline function at a given point. 
	In the subsection \ref{refining_meshes_splines_subsection} 
	we describe a procedure for refining 
	regular splines.

	\subsection{Encoding splines
	over hierarchical meshes} 
	\label{encoding_of_splines_subsec}

	Let $\mathcal{T}$ be a regular hierarchical
	$d$--dimensional mesh 
	defined by a nested sequence of domains 
	$\Omega^0 = \mathbb{R}^d \supseteq \Omega^1 
	\supseteq \dots \supseteq \Omega^{N-1} 
	\supseteq \Omega^N = \emptyset$, 
	where $\Omega^{N-1} \neq \emptyset$.  
	Let $f = 
	\sum\limits_{\beta \in \mathcal{K}} 
	\lambda_\beta \beta$ 
	be a spline in $\mathcal{S}_m (\mathcal{T})$ 
	defined by some coefficients 
	$\lambda_\beta$, $\beta \in \mathcal{K}$, where 
	$\mathcal{K} = \bigcup\limits_{\ell=0}^{N-1} 
	\mathcal{K}^\ell$ 
	is obtained by Kraft's selection 
	mechanism \eqref{Kraft_selection_mechanism1}.
	Each $\beta \in \mathcal{K}^\ell$
	is associated with the cell 
	$c_\beta \in \mathcal{K}^\ell_c 
	\subseteq \mathcal{C}^\ell _d$ which is then 
	associated with the string
	$\left(\overline{z}_{c_\beta}\right)_b \in \widehat{L}_\ell$, see the subsection \ref{regularity_for_K_subsection}. 
	
	\begin{definition} 
		\label{regular_spline_def}	  
		We say that a spline 
		$f \in \mathcal{S}_m (\mathcal{T})$ is 
		regular if 
		the coefficients 
		$\lambda_\beta \in \mathbb{Z}[1/b]$ 
		for all $\beta \in \mathcal{K}$ and 
		the relation 
		$S_f = \{((\overline{z}_{c_\beta})_b,  
		(\lambda_\beta )_b )
		\, | \, \beta \in \mathcal{K} \}
		\subset \mathcal{L}^{d}_b \times 
		\mathcal{L}_b$ is 
		FA--recognizable. 
	\end{definition}	      
	For a given $\ell = 0, \dots, N-1$, we denote 
	by $S_f ^\ell$ the relation:
	\begin{equation}
		\label{slf_formula}  
		S_f ^\ell= \{((\overline{z}_{c_\beta})_b, 
		(\lambda_\beta)_b )
		\, | \, \beta \in \mathcal{K}^\ell \}.
	\end{equation}
	%Similarly to Proposition \ref{L_reg_prop1}, 
	A spline $f \in \mathcal{S}_m (\mathcal{T})$ 
	is regular if and only if 
	each of the relation $S_f^\ell$ is FA--recognizable for 
	$\ell = 0,\dots,N-1$.
	
	Since the relation $Add_d$ is FA--recognizable, 
	for given regular splines    
	$f_1, f_2 \in \mathcal{S}_m (\mathcal{T})$
	the sum $(f_1 + f_2) \in \mathcal{S}_m (\mathcal{T})$ 
	is a regular spline. 
	Moreover, for any constant 
	$\mu \in \mathbb{Z}[1/b]$ 
	the relation: 
	\begin{equation}
		\label{multiplication_relation}  
		R_\mu = \{\left( (\lambda)_b, 
		(\mu \lambda)_b\right) \in 
		\mathcal{L}_b \times \mathcal{L}_b \,|\,
		\lambda \in \mathbb{Z}[1/b]\} 
	\end{equation} 
	is FA--recognizable. 
	Therefore,  
	for a regular spline 
	$f \in \mathcal{S}_m (\mathcal{T})$, 
	the spline $\mu f \in \mathcal{S}_m (\mathcal{T})$
	is regular. Thus, the set of all 
	regular splines in $\mathcal{S}_m (\mathcal{T})$ 
	forms a module over the ring $\mathbb{Z}[1/b]$.

	\subsection{Examples} 
	\label{examples_regular_splines}

	In this subsection we give examples of regular splines.  
	In particular, we will show that all linear functions
	with coefficients from  $\mathbb{Z}[1/b]$ are regular 
	splines. Note that a spline 
	$f \in \mathcal{S}_m (\mathcal{T})$ 
	with bounded support $\mathrm{supp}\,f$
	is always regular. The latter implies that a 
	continuous function on a compact domain 
	can be approximated arbitrarily close 
	by a regular spline.

	Let $\mathcal{T}^0_d$ be a mesh defined by 
	the $d$--dimensional integer grid 
	$\mathcal{G}_d ^0$, see the subsection \ref{spline_section}.   
	A constant function over $\mathcal{T}^0_d$ which takes 
	the value  $\lambda \in \mathbb{Z}[1/b]$
	for every point $x \in \mathbb{R}^d$ is a regular spline
	in $\mathcal{S}_m (\mathcal{T}^0_d)$ for 
	$m \geqslant 0$. 
	This follows from the partition of 
	unity property for B--splines: 
	$\sum\limits_{i=-\infty}^{\infty} N^0_{i,m} =1$ for 
	$m \geqslant 0$.  
	Moreover, a linear function 
	$\sum\limits_{i=1}^d \alpha_i x_i$, 
	for $\alpha_i  \in \mathbb{Z}[1/b]$, 
	$i=1,\dots,d$ is a regular spline
	in $\mathcal{S}_m (\mathcal{T})$ 
	for $m \geqslant 1$.
	Since the collection of regular splines  
	is closed under taking the 
	sum and multiplication by a constant 
	$\lambda \in \mathbb{Z}[1/b]$, it is enough 
	to prove it for the functions 
	$f_{i,d}(\overline x)=x_i$ for
	$\overline x = (x_1,\dots,x_d)\in \mathbb{R}^d$ 
	and $i=1,\dots,d$. 
	In order to prove 
	the latter, it is enough only  
	to show that  
	the linear function $f(t)=t$, $t \in \mathbb{R}$
	is a regular spline in $\mathcal{S}_m (\mathcal{T}^0_1)$.      
	This follows from the identity  
	$\sum\limits_{i=-\infty}^{+\infty} 
	c_{i,m} N_{i,m}^0 (t) = t$ for $m \geqslant 1$, where 
	$c_{i,m}=  i + \frac{m+1}{2}$. This identity 
	is proved by induction. 
	For $m=1$, we have that  
	(see also \eqref{degree_1_bspline}):
	$$N_{i,1} ^0 (t) = \begin{cases} 
		t-i, \, i \leqslant t < i + 1, 
		\\
		i+2-t, \, i+1 \leqslant t < i+2,
		\\ 0, \, \text{otherwise}.
	\end{cases}$$   
	Therefore, for $t \in [i,i+1]$, we have:  
	$\sum\limits_{i=-\infty}^{+\infty} 
	c_{i,1} N_{i,1}^0 (t) = (i+1)N^0_{i,1}(t) + ((i-1)+1) N^0_{i-1,1}(t) = 
	(i+1) (t-i) + i (i+1 -t)=t$. 
	The inductive step follows from the   
	Cox--de Boor's formula \eqref{cox-de_boor_formula} 
	as follows. Assume that $\sum\limits_{i=-\infty}^{+\infty} 
	c_{i,m} N_{i,m}^0 (t) = t$ holds for some 
	$m \geqslant 1$. By \eqref{cox-de_boor_formula} we have that:
	{\small
	\begin{equation*} 
	\begin{split}
	\sum\limits_{i=-\infty}^{+\infty} 
	c_{i,m+1} N_{i,m+1}^0 (t) = 
	\sum\limits_{i=-\infty}^{+\infty} 
	c_{i,m+1} \left(\frac{t-i}{m+1} 
	N_{i,m}^0(t) + \frac{i+m+2-t}{m+1}N_{i+1,m}^0(t) \right) 
	=  \\
	\sum\limits_{i=-\infty}^{+\infty} 
	\left(c_{i,m+1} \frac{t-i}{m+1} + 
	 c_{i-1,m+1} 
	\frac{i+m+1-t}{m+1} \right) 
	N_{i,m}^0 (t) \\= 
	 \sum\limits_{i=-\infty}^{+\infty} 
	\left(\frac{t}{m+1} +
	c_{i,m} \frac{m}{m+1}\right)
	N_{i,m}^0 (t) = t.
 	\end{split}
    \end{equation*}}
	 From the formula  
	$c_{i,m}=  i + \frac{m+1}{2}$ it is clear that 
	$f(t)=t$ is a regular spline in $\mathcal{S}_m (\mathcal{T}^0_1)$.
	Thus we have the following theorem. 
	\begin{theorem} 
		Linear functions 
		$f : \mathbb{R}^d \rightarrow \mathbb{R}$  with coefficients in $\mathbb{Z}[1/b]$ are regular splines
		in $\mathcal{S}_m (\mathcal{T}^0_1)$.  
	\end{theorem}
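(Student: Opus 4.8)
The plan is to reduce the statement to a small set of generators by exploiting the module structure established just above: the regular splines in $\mathcal{S}_m(\mathcal{T}^0_d)$ are closed under addition (because $Add_d$ is FA--recognizable) and under multiplication by a constant $\mu \in \mathbb{Z}[1/b]$ (because the relation $R_\mu$ of \eqref{multiplication_relation} is FA--recognizable). An affine function $f(\overline{x}) = \alpha_0 + \sum_{i=1}^d \alpha_i x_i$ with every $\alpha_j \in \mathbb{Z}[1/b]$ is a $\mathbb{Z}[1/b]$--linear combination of the constant function $1$ and the coordinate functions $f_{i,d}(\overline{x}) = x_i$, so by closure it suffices to show that each of these generators is a regular spline.

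First I would handle the constant function: by the partition of unity $\sum_i N_{i,m}^0 = 1$ the constant $1$ is the spline $\sum_{\overline{i}} P_{\overline{i},m}^0$, i.e. every coefficient equals $1$, and the associated relation $S_f$ maps every barycentre string to the fixed string $(1)_b$, which is trivially FA--recognizable. For the coordinate function $f_{i,d}(\overline{x}) = x_i$ I would use the tensor product structure together with partition of unity in the remaining coordinates: writing $x_i = \bigl(\sum_{i_i} c_{i_i,m} N_{i_i,m}^0(x_i)\bigr)\prod_{j \neq i}\bigl(\sum_{i_j} N_{i_j,m}^0(x_j)\bigr)$ exhibits $f_{i,d} = \sum_{\overline{i}} c_{i_i,m}\,P_{\overline{i},m}^0$, so that the coefficient of $P_{\overline{i},m}^0$ depends only on the $i$--th index. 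This reduces the whole problem to the univariate identity $\sum_i c_{i,m} N_{i,m}^0(t) = t$ with $c_{i,m} = i + \tfrac{m+1}{2}$.

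I would prove this univariate identity by induction on $m$: the base case $m=1$ is a direct computation on each interval $[i,i+1]$, and the inductive step substitutes the Cox--de Boor recurrence \eqref{cox-de_boor_formula}, reindexes, and collects the coefficient of $N_{i,m}^0(t)$ into the form $\tfrac{t}{m+1} + \tfrac{m}{m+1}c_{i,m}$; applying $\sum_i N_{i,m}^0 = 1$ and the inductive hypothesis $\sum_i c_{i,m} N_{i,m}^0(t) = t$ then gives $\tfrac{t}{m+1} + \tfrac{m}{m+1}t = t$. Since $b$ is even we have $\tfrac12 \in \mathbb{Z}[1/b]$, hence $c_{i,m} \in \mathbb{Z}[1/b]$, as required by Definition \ref{regular_spline_def}. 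It then remains to verify FA--recognizability of the coefficient relation $S_f$. The key observation is that $c_{i,m}$ differs from the $i$--th coordinate of the barycentre $\overline{z}_{c_\beta}$ of the associated cell by a fixed constant of $\mathbb{Z}[1/b]$ depending only on $m$ and the parity of $m+1$; thus $S_f$ reads off the $i$--th track of the convolution $(\overline{z}_{c_\beta})_b$ and applies a constant shift, and both operations are FA--recognizable (the shift because $Add$ is FA--recognizable). Combining this with the reductions of the previous paragraphs yields that every affine $f$ with coefficients in $\mathbb{Z}[1/b]$ is a regular spline.

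The main obstacle is pinning down the correct coefficient sequence $c_{i,m}$ and verifying the univariate identity; once the affine form $c_{i,m} = i + \tfrac{m+1}{2}$ is guessed, the induction is routine, but it is precisely this affine shape that forces the coefficient to be a constant shift of a single barycentre coordinate and hence makes $S_f$ FA--recognizable. The remaining steps---module closure, partition of unity, and the tensor product factorization---are essentially bookkeeping, and FA--recognizability follows immediately once the coefficient is seen to be an affine function of the cell index.
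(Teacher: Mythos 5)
Your proposal is correct and follows essentially the same route as the paper: reduce via the $\mathbb{Z}[1/b]$--module structure to constants and coordinate functions, use partition of unity, and establish the univariate identity $\sum_i c_{i,m} N^0_{i,m}(t) = t$ with $c_{i,m} = i + \tfrac{m+1}{2}$ by induction on the Cox--de Boor recurrence. You in fact supply more detail than the paper at the final step, where the paper merely asserts that regularity is ``clear'' from the formula for $c_{i,m}$, whereas you spell out that the coefficient is a fixed constant shift of a barycentre coordinate and hence $S_f$ is FA--recognizable via $Add$.
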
	  
	The set of regular splines with unbounded 
	support is much wider than the set of linear 
	functions with coefficients in $\mathbb{Z}[1/b]$.    
	Below we give two simple examples. 
	\begin{figure}[h]   
		\includegraphics[scale=0.44]{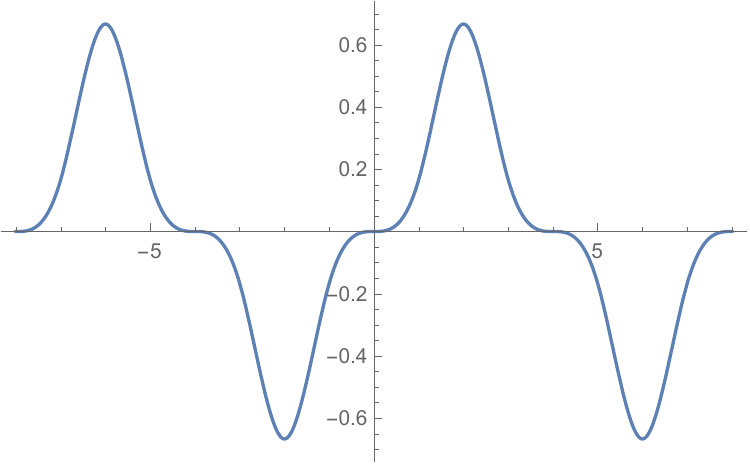}
		\hskip5mm
		\includegraphics[scale=0.44]{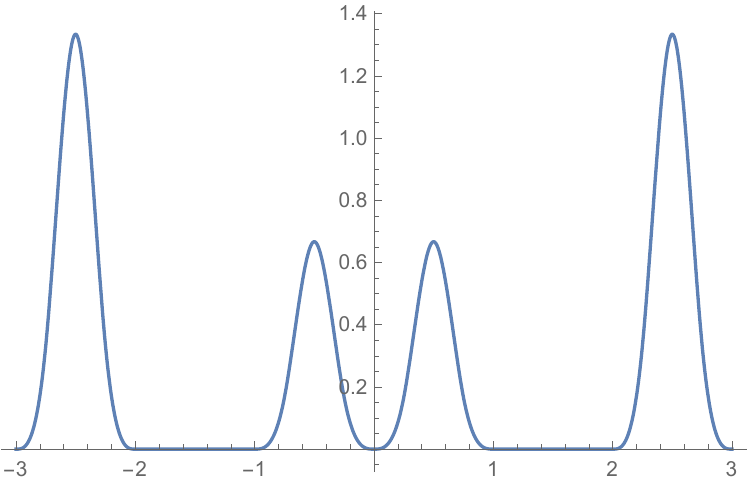}
		\caption{The left figure shows the spline 
			$g(t)$. The right figure shows the 
			spline $h(t)$.} 
		% 		$g(t) = \sum\limits_{j=-\infty}^{\infty} 
		% 		c_{j} N^0_{4j,3}(t)$, where $c_j=1$ if 
		% 		$j$ is even and $c_j=-1$ if $j$ is odd.
		% 		The right figure shows the spline 
		% 		$h(t)=\sum\limits_{j=0}^{\infty} 
		% 		c_j ' N^2_{8j,3}(t) + \sum\limits_{j=-\infty}^{-1} 
		% 		c_j ' N^2_{8j+4,3}(t)$, where $c_j ' = j+1$ for 
		%$j\geqslant 0$ and $c_j ' = -j$ for $j \leqslant -1$.	
		\label{waves}
	\end{figure}
	
	Let $g(t)$ be a spline
	$g(t) = \sum\limits_{j=-\infty}^{\infty} 
	c_{j} N^0_{4j,3}(t)$, where $c_j=1$ if 
	$j$ is even and $c_j=-1$ if $j$ is odd, 
	see Fig.\ref{waves} (left). 
	It can be seen that $g(t)$ is a regular spline 
	in $\mathcal{S}_3 (\mathcal{T}^0 _1)$.  	  
	Now let $\mathcal{T}$ be a 
	one--dimensional hierarchical 
	mesh generated by the domains 
	$\Omega^1 = \Omega^2 = 
	\bigcup\limits_{i=0}^{\infty} 
	\left( [2i,2i+1] \cup [-2i-1,-2i]\right)$ 
	and $h(t) \in \mathcal{S}_3 (\mathcal{T})$
	be a spline function 
	$h(t)=\sum\limits_{j=0}^{\infty} 
	c_j ' N^2_{8j,3}(t) + \sum\limits_{j=-\infty}^{-1} 
	c_j ' N^2_{8j+4,3}(t)$, 
	where $c_j ' = j+1$ for $j\geqslant 0$ and 
	$c_j ' = -j$ for $j \leqslant -1$, 
	see Fig.~\ref{waves} (right).
	It can be seen that $h(t)$ 
	is a regular spline 
	in $\mathcal{S}_3(\mathcal{T})$.

	\subsection{Computing values of a regular spline}
	\label{computing_values_subsection}

	Let $f \in \mathcal{S}_m (\mathcal{T})$ be a 
	regular spline given by 
	a FA--recognizable relation  
	$S_f$. 
	We assume that for each $\ell = 0, \dots, N-1$ 
	we have a deterministic finite 
	automaton $M_\ell$ recognizing the relation 
	$S_f ^\ell$ \eqref{slf_formula}.   
	In this subsection we discuss the problem 
	of computing the value $f(\overline x)$ of 
	the function $f$ at a  point 
	$\overline{x} = (x_1,\dots,x_d) \in \mathbb{Z}[1/b]^d$
	given as the input.

	Let $R_f ^\ell \subseteq \mathcal{L}_b^{d} \times 
	\mathcal{L}_b^{d} \times \mathcal{L}_b$ 
	be the relation 
	containing all triples  
	$((\overline x)_b, (\overline{z}_{c_\beta})_b,
	(\lambda_\beta)_b )$ of
	strings $(\overline x)_b \in \mathcal{L}_b ^d$,
	$(\overline{z}_{c_\beta})_b \in \mathcal{L}_b ^d$
	and 
	$(\lambda_\beta)_b \in \mathcal{L}_b$ for
	$\beta \in \mathcal{K}^\ell$ such that
	$\overline x \in \mathrm{supp}\,\beta$:
	\begin{equation*}
		R_f ^\ell = \{( 
		(\overline x)_b , (\overline{z}_{c_\beta})_b, 
		(\lambda_\beta)_b ) \,|\, \overline x
		\in \mathbb{Z}[1/b]^d, 
		\beta \in \mathcal{K}^\ell
		\land 
		\overline x \in \mathrm{supp}\,\beta \}.
	\end{equation*}  
	Let $\overline y = (y_1,\dots,y_d) = 
	\overline{z}_{c_\beta}$. 
	The condition  
	$\overline x \in \mathrm{supp}\,\beta$
	for $\beta \in  \mathcal{K}^\ell$ is true 
	if and only if the inequalities:
	\begin{equation}
		\label{x_in_supp_beta_condition}   
		-\frac{m+2}{2^{\ell+1}} < x_i - y_i < \frac{m}{2^{\ell+1}},\,\, 
		-\frac{m+1}{2^{\ell+1}}< x_i - y_i < \frac{m+1}{2^{\ell+1}}  
	\end{equation}
	hold 
	for all $i=1,\dots,d$,
	if 
	$m$ is odd and even, 
	respectively, see Fig.~\ref{cell_x_y_points}.  
	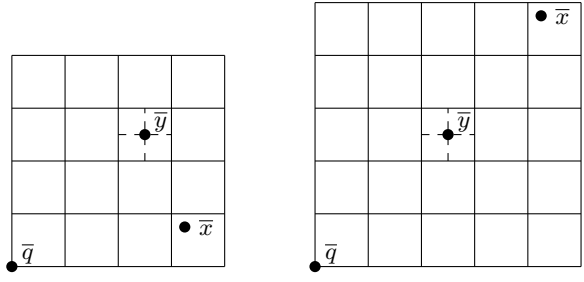
\begin{figure}[h]
		\centering 
		\begin{tikzpicture}[scale=0.6]
			\foreach \x in {0,...,4}
			\draw (\x, 0) -- (\x, 4);
			\foreach \y in {0,...,4} 
			\draw (0,\y) -- (4, \y);    
			\fill[black] (2.5,2.5) circle (3pt);
			\node at (2.8,2.75) {$\overline y$};
			\fill[black] (3.25,0.75) circle (3pt);
			\node at (3.65,0.75) {$\overline x$};
			\fill[black] (0,0) circle (3pt);
			\node at (0.3,0.25) {$\overline q$};
			\draw [dashed] (2,2.5) -- (3,2.5);
			\draw [dashed] (2.5,2) -- (2.5,3);
		\end{tikzpicture} 
		\hskip10mm 
		\begin{tikzpicture}[scale=0.6]
			\foreach \x in {0,...,5}
			\draw (\x, 0) -- (\x, 5);
			\foreach \y in {0,...,5} 
			\draw (0,\y) -- (5, \y);    
			\fill[black] (2.5,2.5) circle (3pt);
			\node at (2.8,2.75) {$\overline y$};
			\fill[black] (4.25,4.75) circle (3pt);
			\node at (4.65,4.75) {$\overline x$};
			\fill[black] (0,0) circle (3pt);
			\node at (0.3,0.25) {$\overline q$};
			\draw [dashed] (2,2.5) -- (3,2.5);
			\draw [dashed] (2.5,2) -- (2.5,3);
		\end{tikzpicture}   	
		\caption{
			The figure on the left shows the support of 
			a spline $\beta \in B^\ell _{2,3}$, 
			the points 
			$\overline x \in \mathrm{supp}\,\beta$, 
			$\overline y = \overline{z}_{c_\beta}$
			and the lower left corner of 
			$\overline{\mathrm{supp}\,\beta}$ -- the point $\overline q$.
			The figure on the right shows the support of 
			a spline $\beta \in B^\ell _{2,4}$, 
			the points 
			$\overline x \in \mathrm{supp}\,\beta$, 
			$\overline y = \overline{z}_{c_\beta}$
			and the lower left corner $\overline q$.} 
		\label{cell_x_y_points} 
	\end{figure}
	
	For $\overline r =(r_1,\dots, r_d) \in \mathbb{Z}[1/b]^d$ and 
	$\overline s = (s_1,\dots,s_d) \in \mathbb{Z}[1/b]^d$ we say that $\overline r < 
	\overline s$ if 
	$r_i < s_i$ for all $i=1,\dots,d$. 
	Let $R_{<}^d$ be the relation 
	$R_{<}^d = \{\left((\overline r)_b,  
	(\overline s)_b  \right) \, | \, \overline r, 
	\overline s \in \mathbb{Z}[1/b]^d, \overline r < \overline s\}$.                   
	The relation $R_{<}^d$ is FA--recognizable. 
	Since $Add_d$ and $R_{<}^d$ are 
	FA--recognizable, the relation given 
	by the inequalities \eqref{x_in_supp_beta_condition} 
	is FA--recognizable. Therefore, since
	$S_f ^\ell$ is FA--recognizable, 
	$R_f  ^\ell$ is FA--recognizable.    
	
	We denote by $\overline q_\beta$ the lower left corner 
	$\overline{q} = (q_1,\dots, q_d)$ of $\overline{\mathrm{supp}\,\beta}$, see 
	Fig.~\ref{cell_x_y_points}. 
	We have that $y_i - q_i = \frac{m+2}{2^{\ell+1}}$ 
	and $y_i - q_i = \frac{m+1}{2^{\ell+1}}$ 
	for all $i=1,\dots,d$, if $m$ is odd and 
	even, respectively. Since $Add_d$ is 
	FA--recognizable, the relation 
	$ Q^\ell _d = \{ (\overline{z}_{c_\beta}, 
	\overline q_\beta ) 
	\,|\,\beta \in B^\ell_{d,m}\}$ 
	is FA--recognizable.     
	Now let $\widetilde{R}_f ^\ell \subset 
	\mathcal{L}_b^{d} \times \mathcal{L}_b^{d} 
	\times \mathcal{L}_b \times \mathcal{L}_b^{d}$ be the 
	following relation: 
	\begin{equation*}
		\begin{split}   
			\widetilde{R}_f ^\ell = \{( (\overline x)_b, (\overline{z}_{c_\beta})_b, 
			(\lambda_\beta)_b,
			(\overline x -\overline{q}_\beta)_b) 
			\,|\, \overline x 
			\in \mathbb{Z}[1/b]^d, 
			\beta \in \mathcal{K}^\ell 
			\land \overline x \in \mathrm{supp}\,\beta \}.
		\end{split}
	\end{equation*} 
	Since the relations $R^\ell _f$, $Q^\ell _d$ and 
	$Add_d$ are FA--recognizable, the relation 
	$\widetilde{R}^\ell _f$ is FA--recognizable. 
	From automata recognizing the relations
	$R_{<}^d, Q^\ell_d, Add_d$ and the automaton 
	$M_\ell$ one can construct a deterministic 
	finite automaton recognizing the relation 
	$\widetilde{R}^\ell _f$ for which the 
	number of states is $O(m_\ell)$, where 
	$m_\ell$ is the number of states of $M_\ell$.

	Note that for a given $\overline x \in \mathbb{Z}[1/b]^d$  
	there exist at most $(m+1)^d$ tensor product  
	B--splines $\beta  \in \mathcal{K}^\ell$ 
	for which $\overline x \in \mathrm{supp}\,\beta$.     
	So $\widetilde{R}^\ell_f$ can be seen 
	as a multivalued function that  
	for a given input $\overline x$ returns at most 
	$(m+1)^d$ pairs 
	$\left((\lambda_\beta)_b, (\overline x -\overline{q}_\beta)_b \right)$ as the output. 
	Since $\widetilde{R}^\ell _f$ is FA--recognizable, 
	this multivalued function is computed in linear 
	time on a deterministic one--tape Turing 
	machine, see the subsection \ref{finite_automata_section}.    
	
	We denote by $\mathcal{K}^\ell _{\overline x}$ 
	the set 
	$\mathcal{K}^\ell _{\overline x} = 
	\{\beta \in \mathcal{K}^\ell \,|\, \overline x \in \mathrm{supp}\,\beta\}$ and 
	by $\mathcal{K}_{\overline x}$ the set 
	$\mathcal{K}_{\overline x}= 
	\bigcup\limits_{\ell=0}^{N-1} 
	\mathcal{K}^\ell _{\overline x}$.  
	After all pairs 
	$((\lambda_\beta)_b, (\overline x -\overline{q}_\beta)_b)$ for which 
	$\overline x \in \mathrm{supp}\,\beta$, where 
	$\beta \in \mathcal{K}$, are computed, 
	the value 
	$f (\overline x) = \sum\limits_{\beta  \in 
		\mathcal{K}_{\overline x}} \lambda_\beta 
	\beta (\overline x)$
	of the spline  $f$      
	at the point $\overline{x}$ 
	is obtained from the formulae for 
	$N_{0,m}^\ell (t)$ by applying multiplication and addition
	operations.  
	Note that $\mathcal{K}_{\overline x}$ 
	is a finite set containing at most 
	$N (m+1)^d$ elements, so there are at most 
	$N (m+1)^d$ terms in the 
	sum $\sum\limits_{\beta  \in 
		\mathcal{K}_{\overline x}} \lambda_\beta 
	\beta (\overline x)$. 
	Therefore, if one applies the standard long multiplication algorithm for the operation of multiplication, 
	the time complexity for evaluating
	$f(\overline{x})$ will be at most quadratic.         
	Thus we have the following theorem. 
	\begin{theorem}
	\label{comp_val_time_complexity}	
		There exists a linear time algorithm which 
		for a given string $(\overline{x})_b$ 
		computes the pairs
		$((\lambda_\beta)_b, (\overline x -\overline{q}_\beta)_b)$ for all 
		tensor product B--splines $\beta$
		from a finite set $\mathcal{K}_{\overline x}$ containing at most $N(m+1)^d$ elements.
		There exists a quadratic time algorithm which
		for a given string $(\overline{x})_b$ computes $f(\overline{x})$.      	
	\end{theorem}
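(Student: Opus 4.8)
The plan is to establish the two assertions in turn: the first (the linear--time stage) follows from the linear--time computability of bounded--valued FA--recognizable functions established just above, while the second (the quadratic--time stage) rests on bounding by a constant the number of arithmetic operations performed, each on strings of length $O(n)$.

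First I would handle the computation of the pairs. For each level $\ell = 0, \dots, N-1$ the relation $\widetilde{R}_f^\ell$ has already been shown to be FA--recognizable and realized by a deterministic finite automaton with $O(m_\ell)$ states. Regarded as a multivalued function of the single argument $(\overline{x})_b$, it returns at most $(m+1)^d$ output pairs $((\lambda_\beta)_b, (\overline{x} - \overline{q}_\beta)_b)$, so the number of values per input is bounded by a fixed constant. I would then invoke the characterization of bounded multivalued FA--recognizable functions recalled in Subsection \ref{finite_automata_section}: every such function is computed in linear time by a deterministic position--faithful one--tape Turing machine. Running the $N$ machines for $\ell = 0, \dots, N-1$ one after another and collecting their outputs yields all pairs for $\beta \in \mathcal{K}_{\overline{x}} = \bigcup_{\ell=0}^{N-1} \mathcal{K}^\ell_{\overline{x}}$; since $N$ is a fixed constant of the mesh, the total running time stays linear and the number of pairs is at most $N(m+1)^d$.

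Next I would assemble the value $f(\overline{x}) = \sum_{\beta \in \mathcal{K}_{\overline{x}}} \lambda_\beta \beta(\overline{x})$. Each pair already supplies $(\lambda_\beta)_b$ together with the local coordinate $(\overline{x} - \overline{q}_\beta)_b$, which locates $\overline{x}$ relative to the lower left corner of $\overline{\mathrm{supp}\,\beta}$. Writing $\beta = P_{\overline{i},m}^\ell$ as the product \eqref{tensor_product_b-splines_def} of univariate factors $N_{i_j,m}^\ell(x_j)$, I would evaluate each factor from the local coordinate using the explicit degree--$m$ piecewise--polynomial expression supplied by the Cox--de Boor recursion \eqref{cox-de_boor_formula}; because $m$ and $d$ are fixed this requires only a bounded number of additions and multiplications in $\mathbb{Z}[1/b]$, and forming $\lambda_\beta \beta(\overline{x})$ and adding it to the running total costs a further bounded number of operations. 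Summing over the at most $N(m+1)^d$ terms, the entire evaluation uses a number of arithmetic operations depending only on $N$, $m$ and $d$.

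Finally I would bound the cost. Every string that appears --- the input $(\overline{x})_b$, the outputs $(\lambda_\beta)_b$ and $(\overline{x} - \overline{q}_\beta)_b$ of the linear--time stage, and each intermediate result --- has length $O(n)$ in the input length $n$, since it is produced by or fed into a linear--time computation. Addition in $\mathbb{Z}[1/b]$ costs $O(n)$ and one multiplication by the standard long multiplication algorithm costs $O(n^2)$, so a constant number of such operations gives total time $O(n^2)$. The only step requiring genuine care, and the one I expect to be the main obstacle, is precisely this bookkeeping: checking that the operation count is truly independent of the input and that operand lengths do not grow beyond $O(n)$ across the successive multiplications, so that the quadratic cost of a single long multiplication is not compounded. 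Once this is verified, both the linear--time and the quadratic--time claims follow.
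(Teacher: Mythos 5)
Your proposal is correct and follows essentially the same route as the paper: the linear--time stage is obtained from the FA--recognizability of the relations $\widetilde{R}_f^\ell$ together with the Case--Jain--Seah--Stephan characterization of bounded multivalued FA--recognizable functions as linear--time computable, and the quadratic--time stage from evaluating the at most $N(m+1)^d$ terms of $\sum_{\beta\in\mathcal{K}_{\overline x}}\lambda_\beta\beta(\overline x)$ with a constant number of additions and long multiplications on strings of length $O(n)$. Your extra attention to the operand--length bookkeeping is a sound refinement of the same argument rather than a different approach.
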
     
	
	\begin{remark}
		In order to guarantee that 
		$f(\overline{x}) \in \mathbb{Z}[1/b]$ for 
		every $\overline{x} \in \mathbb{Z}[1/b]$, one has 
		to choose the base $b$ properly depending on the degree $m$. For an illustration let us recall the formulae for  $N_{0,m}^0 (t)$ for 
		$m=1,2,3$. Note that if $\ell>0$, 
		$N_{0,m}^\ell (t) = N_{0,m} ^0 (2^\ell t)$ 
		for $t \in  \left(0, \frac{m+1}{2^\ell}\right)$.  
		By \eqref{degree_0_bspline} and \eqref{cox-de_boor_formula}, one can obtain that 
		(see \cite{Shirley2005}): 
		\begin{equation}  
			\label{degree_1_bspline}   
			N_{0,1} ^0 (t) = \begin{cases} 
				t, \, 0 < t < 1, 
				\\
				2-t, \, 1 \leqslant t < 2,
				\\ 0, \, t \notin (0,2),
			\end{cases}          
		\end{equation} 
		\begin{equation}  
			\label{degree_2_bspline}   
			N_{0,2} ^0 (t) = \begin{cases} 
				\frac{1}{2}t^2, \, 0 < t < 1, 
				\\
				-(t-1)^2 + (t-1) + \frac{1}{2}, \, 1 \leqslant t < 2,
				\\
				\frac{1}{2} (3- t)^2, \, 2 \leqslant t < 3,
				\\ 0, \, t \notin (0,3),
			\end{cases}          
		\end{equation} 
		\begin{equation}  
			\label{degree_3_bspline}   
			N_{0,3} ^0 (t) = \begin{cases} 
				\frac{1}{6}t^3, \, 0 < t < 1, 
				\\
				\frac{1}{6}\left(-3(t-1)^3 + 3 (t-1)^2 + 3 (t-1) + 1\right), \, 1 \leqslant t < 2,
				\\
				\frac{1}{6} \left(3 (t-2)^3 - 6 (t-2)^2 + 4 \right), \, 2 \leqslant t < 3, 
				\\
				\frac{1}{6} \left(-(t-3)^2 + 3(t-3)^2 - 3 (t-3) + 1\right), \, 3 \leqslant t < 4, 
				\\ 0, \, t \notin (0,4).
			\end{cases}          
		\end{equation}  
		It follows from the formulae 
		\eqref{degree_1_bspline} and \eqref{degree_2_bspline} 
		that for $m=1,2$ and $b$ divisible by $2$, if
		$\overline x  \in \mathbb{Z}[1/b]$, 
		then $f(\overline x) \in \mathbb{Z}[1/b]$. 
		However, in order to guarantee     
		the same for $m=3$, one should require that $b$ 
		is divisible by $6$ \eqref{degree_3_bspline}.  
	\end{remark}   
	
	\subsection{Refining regular splines}
	\label{refining_meshes_splines_subsection}      
	
%	In this subsection we discuss 	
%	a procedure for refining regular splines.

    In this section we discuss how the encoding  
    of a regular spline changes when 
    the underlying hierarchical mesh is refined. 
	Let $\mathcal{T}$ be a regular hierarchical mesh formed 
	by domains $\Omega^0 = \mathbb{R}^d \supseteq \Omega^1 \supseteq \dots \supseteq \Omega^{N-1} \neq \emptyset$.
	Let $\mathcal{T}'$ be a refinement of $\mathcal{T}$ 
	defined by a nested sequence of domains 
	$\Omega^0 = \mathbb{R}^d \supseteq \Omega^1 \supseteq 
	\dots \supseteq \Omega^{N-1} \supseteq \Omega^{N} 
	\neq \emptyset$, where
	$\Omega^N$ is composed 
	of cells from $\mathcal{C}^{N-1}_d$. 
	We assume that the language $L_N$ 
	corresponding to the domain $\Omega^N$
	is regular, so $\mathcal{T}'$ is a regular 
	hierarchical mesh (we refer to $\mathcal{T}'$ 
	as a regular refinement of $\mathcal{T}$). 
	We will show that if $f$ is a regular spline in 
	$\mathcal{S}_m (\mathcal{T})$, then it is 
	a regular spline in $\mathcal{S}_m (\mathcal{T}')$.

	Let $\mathcal{K} = \bigcup\limits_{\ell=0}^{N-1} 
	\mathcal{K}^\ell$ and $\mathcal{K}'= 
	\bigcup\limits_{\ell=0}^{N} \mathcal{K}'^{\ell}$
	be the collections of tensor product 
	B--splines generated by Kraft's selection mechanism 
	for the hierarchical meshes $\mathcal{T}$ and $\mathcal{T}'$, respectively.  
	We denote by $\widehat{L}_0, \dots, 
	\widehat{L}_{N-1}, \widehat{L}'_0, \dots, \widehat{L}'_N$
	the languages corresponding to the 
	collections    
	$\mathcal{K}^0,\dots,\mathcal{K}^{N-1},
	\mathcal{K}'^0,\dots,\mathcal{K}'^{N}$, respectively.         
	Since $\mathcal{K}^0 = \mathcal{K}'^0, \dots,  
	\mathcal{K}^{N-2} = \mathcal{K}'^{N-2}$, 
	we have that $\widehat{L}'_0 = \widehat{L}_0, \dots, 
	\widehat{L}'_{N-2} = \widehat{L}_{N-2}$.  
	A tensor product B--spline 
	$\beta \in \mathcal{K}'^{N-1}$ if and only if  
	$\beta \in \mathcal{K}^{N-1}$ and 
	$\mathrm{supp}\, \beta \cap \mathcal{M}^{N-1} \neq 
	\emptyset$, where 
	$\mathcal{M}^{N-1} = 
	\mathbb{R}^d \setminus \Omega^N$.  
	Therefore, the language $\widehat{L}'_{N-1}$ is defined 
	by the formula  
	$\left(u \in \widehat{L}_{N-1} \right) \land \Psi_{N-1}$, where $\Psi_{N-1}$ is given by \eqref{psi_l_0_definition}.     
	Similarly, the language $\widehat{L}_{N}'$
	is defined by the formula 
	$\Gamma_{N}$ \eqref{Theta_K_N-1_formula}. 
	Therefore, all languages $\widehat{L}'_0,\dots, 
	\widehat{L}'_{N}$ are regular.

	Let $f \in \mathcal{S}_m (\mathcal{T})$ 
	be a regular spline and $S_f$ be a corresponding 
	FA--recognizable relation, see Definition 
	\ref{regular_spline_def}. 
	We have that: 
	$$f = \sum\limits_{\ell=0}^{N-1} 
	\sum\limits_{\beta \in \mathcal{K}^\ell}  
	\lambda_\beta \beta =
	\sum\limits_{\ell=0}^{N-2} 
	\sum\limits_{\beta \in \mathcal{K}^\ell}  
	\lambda_\beta \beta  + 
	\sum\limits_{\beta \in \mathcal{K}'^{N-1}}
	\lambda_\beta \beta + \sum\limits_{\beta \in 
		\mathcal{K}'^{N}} \lambda_\beta \beta.$$ 
	Therefore,  
	$S_f ^0 = S'^0_f,\dots,S_f^{N-2} = S'^{N-2}_f$
	and $S'^{N-1}_f = 
	\{(\overline u,v) \in S^{N-1}_f \,|\, \overline  u \in \widehat{L}'_{N-1} \}$.  
	Clearly,  $S'^0_f, \dots, S'^{N-1}_f$ are FA--recognizable.      
	Below we will show that 
	the coefficients 
	$\lambda_{\beta} \in \mathbb{Z}[1/b]$ for 
	all $\beta \in \mathcal{K}'^{N}$
	and the relation 
	$S'^{N}_f  = 
	\{((\overline z_{c_\beta} )_b, 
	(\lambda_\beta)_b )\, | \,
	\beta \in \mathcal{K}'^N \}$ is FA--recognizable.  
	
	For any given $\delta \in B_{d,m}^{\ell-1}$ 
	each  $\beta \in B_{d,m}^{\ell}$, 
	for which $\mathrm{supp}\,\beta \subseteq 
	\mathrm{supp}\,\delta$,    
	corresponds to a multi--index 
	$\overline j_{\delta, \beta} = (j_1,\dots, j_d)$, 
	where $0 \leqslant j_k \leqslant m+1$ 
	for all $k = 1,\dots,d$, that determines the 
	position of $\mathrm{supp}\,\beta$ inside 
	$\mathrm{supp}\,\delta$. See Fig.~\ref{support_delta_beta}
	illustrating supports of $2$--dimensional 
	tensor product B--splines  
	$\beta_1,\beta_2,\gamma_1,\gamma_2,\delta_1,\delta_2$ with multi--indices  
	$\overline j_{\delta_1, \beta_1} = 
	(0,1)$, $\overline j_{\delta_1, \gamma_1} = 
	(4,3)$, 
	$\overline j_{\delta_2, \beta_2} = 
	(0,1)$ and $\overline j_{\delta_2, \gamma_2} = 	    (4,3)$. 
	For given $\delta \in B^{\ell-1}_{d,m}$ 
	and a multi--index 
	$\overline j = \left(j_1,\dots, j_d\right)$ 
	we denote by $\beta_{\delta,\overline j}$ 
	the tensor product B--spline $\beta \in B^{\ell}_{d,m}$ 
	for which $\overline j _{\delta,\beta} = \overline j$. 
	Note that for the barycentres $\overline z_{c_\delta}$ and  
	$\overline z_{c_\beta}$ the following 
	holds: 
	\begin{equation}
		\label{delta_beta_centers}   
		\overline z_{c_\delta} - \overline z_{c_\beta} = 
		\begin{cases}
			\left(\frac{m+1}{2^{\ell+1}} - \frac{j_1}{2^\ell}, 
			\dots,  
			\frac{m+1}{2^{\ell+1}} - \frac{j_d}{2^\ell} \right), \, \text{if} \,\, m \,\, \text{is even}, 
			\\
			\left(\frac{m+2}{2^{\ell+1}} - \frac{j_1}{2^\ell}, 
			\dots, \frac{m+2}{2^{\ell+1}} - \frac{j_d}{2^\ell} \right),  \, \text{if} \,\, m \,\, \text{is odd}.  
		\end{cases} 
	\end{equation}

	\begin{figure}[h]
		\centering 
		\begin{tikzpicture}[scale=0.65]
			\foreach \x in {0,...,5}
			\draw (\x, 0) -- (\x, 5);
			\foreach \y in {0,...,5} 
			\draw (0,\y) -- (5, \y);    
			\draw [dashed] (0,2.5) -- (5,2.5);
			\draw [dashed] (0,1.5) -- (5,1.5);
			\draw [dashed] (0,0.5) -- (5,0.5);
			\draw [dashed] (0,3.5) -- (5,3.5);
			\draw [dashed] (0,4.5) -- (5,4.5);
			\draw [dashed] (2.5,0) -- (2.5,5);
			\draw [dashed] (1.5,0) -- (1.5,5);
			\draw [dashed] (0.5,0) -- (0.5,5);
			\draw [dashed] (3.5,0) -- (3.5,5);
			\draw [dashed] (4.5,0) -- (4.5,5);
			\draw [pattern=north west lines, pattern color=gray] (0,0.5) rectangle (2.5,3);
			\draw [pattern=north west lines, pattern color=gray] (2,1.5) rectangle (4.5,4);
			\fill[black] (1.25,1.75) circle (2pt);
			\node at (1.75,1.75) {$\overline p_1$};
			\node at (1.25,0.75) {$\beta_1$};
			\fill[black] (3.25,2.75) circle (2pt);
			\node at (3.75,2.75) {$\overline r_1$};
			\node at (3.25,1.75) {$\gamma_1$};
			\fill[black] (2.5,2.5) circle (2pt);
			\node at (2.25,2.75) {$\overline q_1$};
		\end{tikzpicture}   
		\hskip10mm	
		\begin{tikzpicture}[scale=0.65]
			\foreach \x in {0,...,4}
			\draw (\x, 0) -- (\x, 4);
			\foreach \y in {0,...,4} 
			\draw (0,\y) -- (4, \y);    
			\draw [dashed] (0,2.5) -- (4,2.5);
			\draw [dashed] (0,1.5) -- (4,1.5);
			\draw [dashed] (0,0.5) -- (4,0.5);
			\draw [dashed] (0,3.5) -- (4,3.5);
			\draw [dashed] (2.5,0) -- (2.5,4);
			\draw [dashed] (1.5,0) -- (1.5,4);
			\draw [dashed] (0.5,0) -- (0.5,4);
			\draw [dashed] (3.5,0) -- (3.5,4);
			\draw [pattern=north west lines, pattern color=gray] (0,0.5) rectangle (2,2.5);
			\draw [pattern=north west lines, pattern color=gray] (2,1.5) rectangle (4,3.5);
			\fill[black] (1.25,1.75) circle (2pt);
			\node at (1.75,1.75) {$\overline p_2$};
			\node at (1.25,0.75) {$\beta_2$};
			\fill[black] (3.25,2.75) circle (2pt);
			\node at (3.75,2.75) {$\overline r_2$};
			\node at (3.25,1.75) {$\gamma_2$};
			\fill[black] (2.5,2.5) circle (2pt);
			\node at (2.25,2.75) {$\overline q_2$};
		\end{tikzpicture}   
		\caption{The left figure shows the support of 
			a spline $\delta_1 \in B^{\ell-1} _{2,4}$, 
			the supports of 
			$\beta_1, \gamma_1 \in B^{\ell} _{2,4}$
			(two hatched rectangles) and the points 
			$\overline p_1 = \overline z_{c_{\beta_1}}$, 
			$\overline r_1 = \overline z_{c_{\gamma_1}}$ 
			and $\overline q_1 = \overline z_{c_{\delta_1}}$. 
			The right figure shows the support of 
			a spline $\delta_2 \in B^{\ell-1} _{2,3}$, 
			the supports of 
			$\beta_2, \gamma_2 \in B^{\ell} _{2,3}$
			(two hatched rectangles) and the points 
			$\overline p_2 = \overline z_{c_{\beta_2}}$, 
			$\overline r_2 = \overline z_{c_{\gamma_2}}$ 
			and $\overline q_2 = \overline z_{c_{\delta_2}}$.} 
		\label{support_delta_beta}    
	\end{figure}
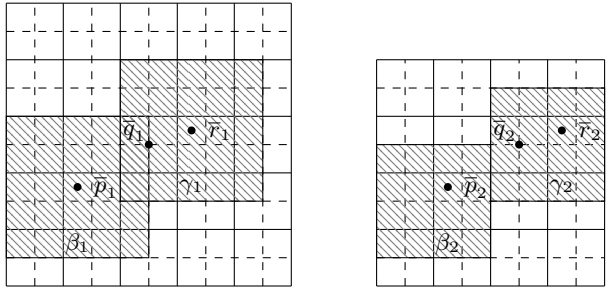
	We denote by $\mathcal{I}_{d,m}$ the set  
	of multi--indices $\mathcal{I}_{d,m} = 
	\{\left(j_1,\dots, j_d \right)\,|\,
	0 \leqslant j_k \leqslant m+1, k =1,\dots,d\}$.  
	For each $\delta \in B_{d,m}^{\ell-1}$, we have that 
	$\delta = 
	\sum\limits_{\overline j \in \mathcal{I}_{d,m}} 
	\lambda_{\overline j} \beta_{\delta,\overline j}$. 
	It can be verified directly from Boehm's 
	knot insertion formula for B--splines \cite{Boehm80} that 
	for $d=1$:  
	$\lambda_j = \frac{1}{2^m} \binom{m+1}{j}$, 
	$j=0,\dots,m+1$, where $\binom{m+1}{j} = 
	\frac{(m+1)!}{j! (m+1-j)!}$ are the binomial 
	coefficients. 
	Therefore, all coefficients 
	$\lambda_j$, $j=0,\dots,m+1$ belong to 
	$\mathbb{Z}[1/b]$ for any even $b$. 
	By the definition of multivariate 
	tensor product B--splines \eqref{tensor_product_b-splines_def}
	we immediately obtain that 
	$\lambda_{\overline j} \in \mathbb{Z}[1/b]$ 
	for all $\overline j \in \mathcal{I}_{d,m}$ as well. 
	
	Let $\overline{\mathcal{K}}^{N-1} =	\mathcal{K}^{N-1}\setminus \mathcal{K}'^{N-1}$.
	We have that 
	$\sum\limits_{\delta \in \overline{\mathcal{K}}^{N-1}}\lambda_\delta \delta= \sum\limits_{\beta \in 
		\mathcal{K}'^{N}}\lambda_\beta \beta$. 
	For a given $\beta \in \mathcal{K}'^N$, 
	let $\Delta_\beta = \{(\delta, \overline j) |\, 
	\delta \in \overline{\mathcal{K}}^{N-1},
	\overline j \in \mathcal{I}_{d,m},
	\mathrm{supp}\, \beta \subset 
	\mathrm{supp}\, \delta \land 
	\beta = \beta_{\delta,\overline j} \}$.
	Then, for any 
	$\beta \in \mathcal{K}'^N$, 
	$\lambda_\beta = \sum\limits_{(\delta, \overline j)
		\in \Delta_\beta} \lambda_\delta \lambda_{\overline j}$.
	Since $\lambda_{\overline j} \in \mathbb{Z}[1/b]$ 
	for all $\overline j \in \mathcal{I}_{d,m}$ and 
	$\lambda_\delta \in \mathbb{Z}[1/b]$ for all 
	$\delta \in \mathcal{K}$, 
	then
	$\lambda_\beta \in \mathbb{Z}[1/b]$ for all 
	$\beta \in \mathcal{K}'^N$.
	
	For a given $\overline j \in \mathcal{I}_{d,m}$, 
	let $q_{\overline j} \in \mathbb{Z}[1/b]^d$ 
	be a constant vector 
	given by the right--hand side of the
	equation \eqref{delta_beta_centers}.   
	For a given $\ell \geqslant 1$ and  
	$\overline j \in \mathcal{I}_{d,m}$, let:
	\begin{equation*}
		R_{\ell, \overline j} =  \{
		((\overline z_{c_\beta} )_b, 
		(\overline z_{c_\delta} )_b ) \, | \, 
		\beta \in B^\ell _{d,m}, 
		\delta \in B^{\ell-1} _{d,m} \land 
		\overline z_{c_\delta} - \overline z_{c_\beta} = 
		q_{\overline j} \}.
	\end{equation*}           
	Since $Add_d$ is FA--recognizable, 
	the relation $R_{\ell, \overline j}$ is FA--recognizable for each $\ell \geqslant 1$ and $\overline j \in \mathcal{I}_{d,m}$. Therefore, 
	since the language 
	$L_{N-1} \setminus L'_{N-1}$ is regular and 
	$S^{N-1}_f$ is FA--recognizable,   
	the relation: 
	\begin{equation*}
		\begin{split}  
			Q_{f, \overline j} =  \{
			( (\overline z_{c_\beta} )_b, 
			(\lambda_\delta )_b ) \, | \, 
			\beta \in \mathcal{K}'^N, 
			\delta \in \overline{\mathcal{K}}^{N-1}, 
			((\overline z_{c_\beta} )_b, 
			(\overline z_{c_\delta} )_b ) \in R_{N,\overline j},
			\\     
			((\overline{z}_{c_\delta})_b, 
			(\lambda_\delta)_b ) \in S_f ^{N-1} \}
		\end{split} 
	\end{equation*}
	is FA--recognizable.  
	Since the multiplication 
	by a constant in $\mathbb{Z}[1/b]$ 
	is FA--recognizable \eqref{multiplication_relation}, we finally obtain that:
	\begin{equation*} 
		\label{S_prime_f_N}   
		S'^N _f = 
		\{((\overline z_{c_\beta} )_b, 
		(\lambda_\beta)_b )\, | \, 
		((\overline z_{c_\beta})_b, 
		(\lambda_\delta )_b ) \in Q_{f, \overline j}, 
		\overline j \in \mathcal{I}_{d,m}
		\land  
		\lambda_\beta = \sum\limits_{(\delta, \overline j)
			\in \Delta_\beta} \lambda_\delta \lambda_{\overline j} \}
	\end{equation*}  
	is FA--recognizable. 
%	Let $\mathcal{T}'$ be a refinement of 
%	a regular hierarchical mesh $\mathcal{T}$.
	Thus, we have the following theorem. 
	\begin{theorem}   
	\label{refinement_thm} 	
		If $f \in \mathcal{S}_m (\mathcal{T})$ is a regular spline  
		and $\mathcal{T}'$ is a regular refinement of $\mathcal{T}$, 
		then $f$ is a regular spline in $\mathcal{S}_m (\mathcal{T}')$.
	\end{theorem}	
	
	\section{Conclusion}
    \label{conclusion_section}
    	
	This paper introduces a way of encoding 
	functions as finite automata combining   
	the framework of hierarchical tensor 
	product B--splines widely used in numerical 
	computations and the idea of finite automata 
	based compression of black and white 
	images \cite{Culik_Comp_Graph_97}.
	This way of encoding functions differs 
	from the traditional approach that uses 
	B\"{u}chi automata and infinite strings 
	to represent points. It enables to encode 
	functions of any finite degree of smoothness other than just linear. 
	We demonstrate that the proposed 
	encoding allows to handle some 
	computational problems for infinite hierarchical meshes and spline functions
	over them in polynomial time using only finite amount of memory.     
	Though this paper is theoretical, 
	we express some hope that 
	the proposed finite automata based encoding
	of functions might be useful for compression. For future work it would be interesting to explore if the proposed encoding extended 
	to infinite level hierarchical meshes ($N = \infty$) can express some nontrivial functions (e.g., the Cantor function 
	previously expressed by finite automata using the traditional approach \cite{CSV13}).

	\section*{Acknowledgments} 
	
	The results of this paper were partially exposed at the 24th Japan Conference on Discrete and Computational Geometry, 
	Graphs, and Games. The authors thank Cesare Bracco and Andre Nies for useful comments. 
	 
\bibliographystyle{splncs03}

\bibliography{infinite_meshes_lncs}

\end{document}